\newcommand\numberthis{\addtocounter{equation}{1}\tag{\theequation}}
\algrenewcommand{\algorithmiccomment}[1]{\hfill {$\sslash$ \scriptsize #1}}
\algrenewcommand\alglinenumber[1]{\tiny #1}
\newcommand{\algalign}[3]{\raisebox{0pt}[-1pt][-1pt]{$\phantom{#1}\mathllap{#2{}}$ $\gets {}#3$}}
\def\therule{\makebox[\algorithmicindent][l]{\hspace*{.5em}\vrule height .75\baselineskip depth .25\baselineskip}}%
\newtoks\therules
\def\appendto#1#2{\expandafter#1\expandafter{\the#1#2}}
\def\gobblefirst#1{
  #1\expandafter\expandafter\expandafter{\expandafter\@gobble\the#1}}%
\def\LState{\State\unskip\the\therules}
\def\pushindent{\appendto\therules\therule}%
\def\popindent{\gobblefirst\therules}%
\def\printindent{\unskip\the\therules}%
\def\printandpush{\printindent\pushindent}%
\def\popandprint{\popindent\printindent}%
\def\cl@chapter{}
\definecolor{lred}{RGB}{200,0,0}
\definecolor{dred}{RGB}{130,0,0} 
\definecolor{dblu}{RGB}{0,0,130}
\definecolor{dgre}{RGB}{0,130,0} 
\definecolor{dgra}{RGB}{50,50,50}
\definecolor{mgra}{RGB}{100,100,100}
\definecolor{lgra}{RGB}{220,220,220}
\definecolor{MPG}{RGB}{000,125,122}
\definecolor{orange}{RGB}{255, 128, 0}
\pgfplotsset{compat=newest}
\pgfplotsset{plot coordinates/math parser=false}       
\newlength\figheight%
\newlength\figwidth%
\newcommand{\superimpose}[2]{
  {\ooalign{$#1\@firstoftwo#2$\cr\hfil$#1\@secondoftwo#2$\hfil\cr}}}
\newcommand{\Exp}{\mathbb{E}} 
\newcommand{\sk}{\mathpalette\superimpose{{\otimes}{\ominus}}} 
\newcommand{\ce}{\colonequals} 
\DeclareMathOperator*{\argmin}{arg\,min} 
\newcommand{\reals}{\mathbb{R}}
\renewcommand{\Re}{\reals}
\newcommand{\N}{\mathcal{N}} 
\newcommand{\set}[1]{\{#1\}}
\newcommand{\norm}[1]{\|#1\|}
\newcommand{\inner}[1]{\langle #1 \rangle}
\renewcommand{\vec}{\boldsymbol} 
\newcommand{\mat}[1]{#1} 
\newcommand{\vecm}[1]{\overrightarrow{#1}} 
\newcommand{\Trans}{^{\top}} 
\newcommand{\kr}{\otimes} 
\newcommand{\A}{\mat A} 
\newcommand{\Ai}{\A^{-\!1}} 
\renewcommand{\b}{\vec b} 
\newcommand*{\arXiv}[1]{\bgroup\color{blue}\href{http://arxiv.org/abs/#1}{arXiv:#1}\egroup}
\newcommand{\spa}{\operatorname{span}}
\newcommand{\qq}{\qquad}
\begin{document}

\title{Probabilistic Linear Solvers: A Unifying View}
\author{Simon Bartels$^*$,
 Jon Cockayne$^*$,
 Ilse C.~F.~Ipsen and Philipp Hennig}


\institute{
Simon Bartels \at
University of Tübingen\\
and Max Planck Institute for Intelligent Systems\\
72076 Tübingen, Germany \\
\email{sbartels@tue.mpg.de}
\and
Jon Cockayne \at
Department of Statistics \\
University of Warwick \\
Coventry, CV4 7AL, UK \\
\email{j.cockayne@warwick.ac.uk}
\and
Ilse C.~F.~Ipsen \at
Department of Mathematics \\
North Carolina State University \\
Raleigh, NC 27695-8205, USA \\
\email{ipsen@ncsu.edu}
\and
Philipp Hennig \at
University of Tübingen\\
and Max Planck Institute for Intelligent Systems\\
72076 Tübingen, Germany \\
\email{ph@tue.mpg.de}
\\
\\
\\
\\
$~^*$ Authors contributed equally.
}

\date{Received: date / Accepted: date}

\maketitle 
 
\begin{abstract}
Several recent works have developed a new, probabilistic interpretation for numerical algorithms solving linear systems in which the solution is inferred in a Bayesian framework, either directly or by inferring the unknown action of the matrix inverse.
These approaches have typically focused on replicating the behavior of the conjugate gradient method as a prototypical iterative method.
In this work surprisingly general conditions for equivalence of these disparate methods are presented. 
We also describe connections between probabilistic linear solvers and projection methods for linear systems, providing a probabilistic interpretation of a far more general class of iterative methods.
In particular, this provides such an interpretation of the generalised minimum residual method.
A probabilistic view of preconditioning is also introduced.
These developments unify the literature on probabilistic linear solvers, and provide foundational connections to the literature on iterative solvers for linear systems.

\keywords{Probabilistic linear solvers \and Projection methods \and Iterative methods \and Preconditioning}
\end{abstract}

\section{Introduction}
Consider the linear system
\begin{equation}
	A\bm{x}^* = \bm{b} \label{eq:system}
\end{equation}
where $A \in \reals^{d\times d}$ is an invertible matrix, $\bm{b} \in \reals^d$ is a given vector and $\bm{x}^* \in \reals^d$ is an unknown to be determined.
Recent work \citep{Hennig2015, Cockayne:2018} has constructed iterative solvers for this problem which output \emph{probability measures}, constructed to quantify uncertainty due to terminating the algorithm before the solution has been identified completely.
On the surface the approaches in these two works appear different:
In the matrix-based inference (MBI) approach of \cite{Hennig2015}, a posterior is constructed on the matrix $\Ai$, while in the solution-based inference (SBI) method of \cite{Cockayne:2018} a posterior is constructed on the solution vector $\bm{x}^*$.

These algorithms are instances of \emph{probabilistic numerical methods} (PNM) in the sense of \cite{HenOsbGirRSPA2015} and \cite{Cockayne:2017}.
PNM are numerical methods which output posterior distributions that quantify uncertainty due to discretisation error.
An interesting property of PNM is that they often result in a posterior distributions whose mean element coincides with the solution given by a classical numerical method for the problem at hand.
The relationship between PNM and classical solvers has been explored for integration \citep[e.g.][]{Karvonen2017}, ODE-solvers \citep{SchoberDH2014, SchoberSarkkaHennig2018, 2018arXiv180709737K} and PDE solvers \citep{Cockayne:2016} in some generality. For linear solvers, attention has thus far been restricted to the conjugate gradient (CG) method.
Since CG is but a single member of a larger class of iterative solvers, and applicable only if the matrix $A$ is symmetric and positive-definite, extending the probabilistic interpretation is still an interesting endeavour.
Probabilistic interpretations provide an alternative perspective on numerical algorithms, and can also provide extensions such as the ability to exploit noisy or corrupted observations.
The probabilistic view has also been used to the develop new numerical methods \citep{Xi:2018}, and \emph{Bayesian} PNM can be incorporated rigorously into pipelines of computation \citep{Cockayne:2017}.

\emph{Preconditioning}---mapping Eq.~\eqref{eq:system} to a better conditioned system whith the same solution---is key to the fast convergence of iterative linear solvers, particularly those based upon Krylov methods \citep{Liesen:2012tt}. The design of preconditioners has been referred to as ``a combination of art and science'' \citep[p. 283]{Saad2003iterative}.
In this work we also provide a new, probabilistic interpretation of preconditioning as a form of prior information.

\subsection{Contribution}
This text contributes three primary insights:
\begin{enumerate}
\item It is shown that, for particular choices of the generative model, matrix-based inference (MBI) and solution-based inference (SBI) can be equivalent (Section~\ref{sec:probabilistic_linear_solvers}).
\item A general probabilistic interpretation of projection methods \citep{Saad2003iterative} is described (Section~\ref{sec:projection_methods}), leading to a probabilistic interpretation of the generalised minimum residual method (GMRES; \citet{Saad1986}, Section \ref{sec:gmres}). The connection to CG is expanded and made more concise in Section \ref{sec:cg}.
\item A probabilistic interpretation of preconditioning is presented in Section~\ref{sec:preconditioning}.
\end{enumerate}
Most of the proofs are presented inline; lengthier proofs are deferred to Appendix~\ref{sec:proofs}.
While an important consideration, the predominantly theoretical contributions of this paper will not consider the impact of finite numerical precision.

\subsection{Notation}
For a symmetric positive-definite matrix $\mat M \in \reals^{d\times d}$ and two vectors $\vec v, \vec w \in \reals^d$, we write $\inner{\vec v, \vec w}_M = \vec v\Trans \mat M \vec w$ for the inner product induced by $\mat M$, and $\norm{\vec v}_M^2 = \inner{\vec v, \vec v}_M$ for the corresponding norm.

A set of vectors $\vec s_1, \dots, \vec s_m$ is called \emph{$\mat M$-orthogonal} or \emph{$M$-conjugate} if $\inner{\vec s_i, \vec s_j}_M = 0$ for $i\neq j$, and \emph{$\mat M$-orthonormal} if, in addition, $\|\vec s_i\|_M = 1$ for $1\leq i\leq m$.

For a square matrix $A =\begin{bmatrix} \bm{a}_1 & \ldots & \bm{a}_d\end{bmatrix}\Trans\in \reals^{d\times d}$,
the \emph{vectorisation operator} $\textup{vec} : \reals^{d\times d} \to \reals^{d^2}$ stacks the rows\footnote{Stacking the columns is equivalently possible and common. It is associated with a permutation in the definition of the Kronecker product, but the resulting inferences are equivalent.} 
of $A$ into one long vector:
\begin{equation*}
\vecm{\A} \equiv \textup{vec}(\A) = \begin{bmatrix} \vec a_1 \\ \vdots \\ \vec a_d \end{bmatrix},\quad\text{with}\quad \left[\vecm{\A}\right]_{(ij)} = [A]_{ij} .
\end{equation*}
The \emph{Kronecker product} of two matrices $A, B \in \reals^{d\times d}$ is $A \kr B$ with $[A\kr B]_{(ij),(k\ell)} = [A]_{ik}[B]_{j\ell}$.
A list of its properties is provided in Appendix~\ref{sec:kron_properties}.

The Krylov space of order $m$ generated by the matrix $A\in\reals^{d\times d}$ and the vector 
$\bm{b}\in\reals^d$ is 
\begin{equation*}
	K_m(\A, \b) = \textup{span}(\b, \A\b, \A^2\b, \dots, \A^{m-1}\b).
\end{equation*}
We will slightly abuse notation to describe shifted and scaled subspaces of $\reals^d$: Let $\mathbb{S}$ be an $m$-dimensional linear subspace of $\reals^d$ with basis $\set{\vec s_1, \dots, \vec s_m}$.
Then for a vector $\vec v \in \reals^d$ and a matrix $M \in \reals^{d\times d}$, let
\begin{equation*}
	\vec v + M \mathbb{S} = \textup{span}(\vec v + M\vec s_1, \dots, \vec v + M\vec s_m).
\end{equation*}

\section{Probabilistic Linear Solvers}
\label{sec:probabilistic_linear_solvers}

Several probabilistic framework describing the solution of \cref{eq:system} have been constructed in recent years.
They primarily differ in the subject of inference:
SBI approaches such as \citet{Cockayne:2018}, of which \emph{BayesCG} is an example, place a prior distribution on the solution $\bm{x}^*$ of \cref{eq:system}.
Conversely, the MBI approach of \citet{Hennig2015} and \citet{Bartels:2016eh} places a prior on $\Ai$, treating the action of the inverse operator as an unknown to be inferred\footnote{\cite{Hennig2015} also discusses inference over $\A$. This model class will not be discussed further in the present work. It has the disadvantage that the associated marginal on $\bm{x}^*$ is non-analytic, but more easily lends itself to situations with noisy or otherwise perturbed matrix-vector products as observations.}.
This section reviews each approach and adds some new insights. 
In particular, SBI can be viewed as strict special case of MBI (Section~\ref{sec:equivalence}).
Throughout this section, we will assume that the search directions $S_m$ in $S_m^\top A \bm{x} = S_m^\top \bm{b}$ are given \emph{a-priori}; Section~\ref{sec:cg} examines algorithms which iteratively generate search directions adapted to the problem at hand.

\subsection{Background on Gaussian conditioning}
The propositions in this section follow from the following two classic properties of Gaussian distributions.

\begin{lemma} \label{lem:gaussian_projection}
  Let $\vec x \in \reals^d$ be Gaussian distributed with density $p(\vec x)=\N(\vec x; \vec x_0, \mat \Sigma)$ for $\vec x_0 \in \reals^d$ and $\Sigma \in \reals^{d\times d}$ a positive semi-definite matrix.
  Let $\mat M \in \reals^{n \times d}$ and $\vec z \in \reals^n$. Then $\vec v = \mat M \vec x + \vec z$ is also Gaussian, with
  \begin{equation*}
    p(\vec v) = \N(\vec v; M\vec x_0 + \vec z, \mat M\mat \Sigma \mat M\Trans).
  \end{equation*}
\end{lemma}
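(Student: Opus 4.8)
The plan is to establish that an affine transformation of a Gaussian random vector is again Gaussian, with the stated mean and covariance. I see two natural routes. The cleanest route, which I would follow first, uses characteristic functions (or equivalently moment generating functions). The characteristic function of $\vec x \sim \N(\vec x_0, \Sigma)$ is $\varphi_{\vec x}(\vec t) = \exp\!\big(i\,\vec t\Trans \vec x_0 - \tfrac{1}{2}\vec t\Trans \Sigma \vec t\big)$ for $\vec t \in \reals^d$. The strength of this approach is that it sidesteps any issues with $\Sigma$ being singular, since no density need exist. For $\vec v = M\vec x + \vec z$ with $\vec t \in \reals^n$, I would compute
\begin{equation*}
  \varphi_{\vec v}(\vec t) = \Exp\!\big[e^{i \vec t\Trans \vec v}\big] = e^{i\vec t\Trans \vec z}\,\Exp\!\big[e^{i (M\Trans \vec t)\Trans \vec x}\big] = e^{i\vec t\Trans \vec z}\,\varphi_{\vec x}(M\Trans \vec t),
\end{equation*}
and then substitute the known form of $\varphi_{\vec x}$ to obtain $\exp\!\big(i\vec t\Trans(M\vec x_0 + \vec z) - \tfrac{1}{2}\vec t\Trans (M\Sigma M\Trans)\vec t\big)$. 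This is exactly the characteristic function of $\N(M\vec x_0 + \vec z, M\Sigma M\Trans)$, and by uniqueness of characteristic functions the claim follows.

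An alternative, more elementary route computes the mean and covariance directly and then argues that these two moments determine the Gaussian. By linearity of expectation, $\Exp[\vec v] = M\Exp[\vec x] + \vec z = M\vec x_0 + \vec z$; for the covariance, expanding $\operatorname{Cov}(\vec v) = \Exp[(\vec v - \Exp[\vec v])(\vec v - \Exp[\vec v])\Trans]$ and pulling the constant matrices $M$ and $M\Trans$ outside the expectation yields $M\,\operatorname{Cov}(\vec x)\,M\Trans = M\Sigma M\Trans$. One still needs the separate fact that affine images of Gaussians are Gaussian (e.g.\ because every linear combination $\vec t\Trans \vec v$ is a linear combination of the jointly Gaussian entries of $\vec x$, hence univariate Gaussian), so the characteristic-function argument is really doing the essential work in either case.

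The main obstacle to watch for is the degenerate case. Since $\Sigma$ is only assumed positive semi-definite and $M$ need not have full rank, the resulting covariance $M\Sigma M\Trans$ may be singular, in which case $\vec v$ has no Lebesgue density and the notation $\N(\vec v; M\vec x_0 + \vec z, M\Sigma M\Trans)$ must be read as denoting the (possibly degenerate) Gaussian law rather than a density on $\reals^n$. The characteristic-function approach handles this uniformly, which is why I would present it as the primary argument; the density-based computation would require a separate limiting or pseudo-inverse argument to cover the singular case, so I would avoid relying on densities. I would close by remarking that the formula is consistent with the convention used throughout the paper, where $\N(\cdot\,;\mu,\Sigma)$ denotes the Gaussian law with the given mean and covariance.
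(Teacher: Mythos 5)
Your proof is correct. Note that the paper itself offers no proof of this lemma: it is stated as one of two ``classic properties of Gaussian distributions'' and used as a black box, so there is no argument of the authors' to compare against. Your characteristic-function route is the standard one and is well chosen here, precisely because the hypothesis only requires $\Sigma$ to be positive semi-definite and $M$ to be an arbitrary $n \times d$ matrix: the law of $\vec v = M\vec x + \vec z$ may be degenerate, and the identity $\varphi_{\vec v}(\vec t) = e^{i\vec t\Trans \vec z}\varphi_{\vec x}(M\Trans \vec t)$ together with uniqueness of characteristic functions settles the claim without ever invoking a Lebesgue density. Your closing remark is also apt — the paper implicitly adopts exactly this convention, e.g.\ in Lemma~\ref{lem:gaussian_conditioning}, where singular $M\Sigma M\Trans$ is explicitly allowed and handled via a pseudo-inverse.
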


\begin{lemma} \label{lem:gaussian_conditioning}
Let $\vec x \in \reals^d$ be distributed as in Lemma \ref{lem:gaussian_projection}, and let observations $\vec y \in \reals^n$ be generated from the conditional density
\begin{equation*}
  p(\vec y\mid \vec x)=\N(\vec y; \mat M \vec x + \vec z, \mat \Lambda)
\end{equation*}
with $\mat M \in \reals^{n \times d}$, $\vec z \in \reals^n$, and $\Lambda \in \reals^{n \times n}$ again positive-semidefinite. 
Then the associated conditional distribution on $\vec x$ after observing $\vec y$ is again Gaussian, with
\begin{align*}
p(\vec x\mid \vec y)&=\N(\vec x; \bar{\vec x}, \bar{\mat \Sigma}) \qquad\text{where}\\
\bar{\vec x} &= 
\vec x_0+\mat \Sigma \mat M\Trans(\mat M\mat{\Sigma}\mat M\Trans + \mat \Lambda)^{-\!1}(\vec y - \mat M\vec x_0 - \vec z) \\
\bar{\mat \Sigma} &= \mat \Sigma - \mat \Sigma \mat M\Trans(\mat M\mat \Sigma \mat M\Trans + \mat \Lambda)^{-\!1}\mat M\mat \Sigma).
\end{align*}
This formula also applies if $\Lambda = 0$, i.e. observations are made without noise, with the caveat that if $M\Sigma M\Trans$ is singular, the inverse should be interpreted as a pseudo-inverse.
\end{lemma}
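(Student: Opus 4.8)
The plan is to construct the joint law of $(\vec x, \vec y)$ and then extract the conditional. First I would introduce an independent noise term $\vec \epsilon \sim \N(\vec 0, \Lambda)$ and write the observation model as $\vec y = \mat M \vec x + \vec z + \vec \epsilon$. The pair $(\vec x, \vec \epsilon)$ is jointly Gaussian, and $(\vec x, \vec y)$ is an affine image of it, so Lemma~\ref{lem:gaussian_projection} applies: $(\vec x, \vec y)$ is jointly Gaussian with mean $(\vec x_0,\ \mat M \vec x_0 + \vec z)$ and covariance blocks $\Sigma$, cross-term $\Sigma \mat M\Trans$, and $\vec y$-marginal $\mat M \Sigma \mat M\Trans + \Lambda$. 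These are already exactly the matrices appearing in the claimed expressions.

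Next I would read off the conditional without quoting a block-inversion formula, using the ``uncorrelated implies independent'' property of jointly Gaussian vectors. Define the gain $K = \Sigma \mat M\Trans (\mat M \Sigma \mat M\Trans + \Lambda)^{-1}$ and the residual $\vec r = \vec x - \vec x_0 - K(\vec y - \mat M\vec x_0 - \vec z)$. A one-line covariance computation gives $\operatorname{Cov}(\vec r, \vec y) = \Sigma \mat M\Trans - K(\mat M \Sigma \mat M\Trans + \Lambda) = 0$, so $\vec r$ is independent of $\vec y$. Hence $\Exp[\vec x \mid \vec y] = \vec x_0 + K(\vec y - \mat M\vec x_0 - \vec z) = \bar{\vec x}$ and $\operatorname{Cov}(\vec x \mid \vec y) = \operatorname{Cov}(\vec r) = \Sigma - K \mat M \Sigma = \bar\Sigma$, which is the statement. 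A purely algebraic alternative, valid when $\Sigma$ and $\Lambda$ are invertible, instead forms $p(\vec x \mid \vec y) \propto p(\vec y\mid\vec x)\,p(\vec x)$ and completes the square to obtain the precision form $\bar\Sigma^{-1} = \Sigma^{-1} + \mat M\Trans\Lambda^{-1}\mat M$; the Woodbury identity then converts this to the covariance form above, and this conversion is the main algebraic step of that route.

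The one genuine obstacle is the degenerate case flagged in the statement. The construction above inverts only $\mat M \Sigma \mat M\Trans + \Lambda$, never $\Sigma$ or $\Lambda$ separately, so it goes through verbatim as long as that matrix is nonsingular. It fails exactly when $\Lambda = 0$ and $\mat M \Sigma \mat M\Trans$ is singular, which is precisely the case the caveat addresses. I would handle it by a continuity argument: replace $\Lambda$ by $\Lambda + \epsilon I$ with $\epsilon > 0$, apply the nonsingular result, and let $\epsilon \downarrow 0$. The delicate point is that $(\mat M \Sigma \mat M\Trans + \epsilon I)^{-1}$ converges, when restricted to the range of $\mat M \Sigma \mat M\Trans$, to the Moore--Penrose pseudo-inverse, and that $\Sigma \mat M\Trans$ lies in that range so the limiting gain $K$ still satisfies $K \mat M \Sigma \mat M\Trans = \Sigma \mat M\Trans$; establishing this (and that the limit is independent of the perturbation) is what makes the pseudo-inverse the correct interpretation.
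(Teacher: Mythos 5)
Your proposal is correct, but there is nothing in the paper to compare it against: the paper states this lemma (together with Lemma~\ref{lem:gaussian_projection}) as a ``classic property of Gaussian distributions'' and gives no proof, so the argument must stand on its own. It does. The joint-Gaussianity step via Lemma~\ref{lem:gaussian_projection}, the gain $K = \Sigma M\Trans (M\Sigma M\Trans + \Lambda)^{-1}$, and the observation that the residual $\vec r = \vec x - \vec x_0 - K(\vec y - M\vec x_0 - \vec z)$ is uncorrelated with, hence independent of, $\vec y$ together yield both $\bar{\vec x}$ and $\bar{\Sigma}$ exactly as claimed (with $\operatorname{Cov}(\vec r) = \Sigma - KM\Sigma$ after the cross terms cancel). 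For the degenerate case, your continuity argument works, and you correctly isolate the crucial fact; but note that the same fact lets you skip the limit entirely. The key identity is $\operatorname{null}(M\Sigma M\Trans) \subseteq \operatorname{null}(\Sigma M\Trans)$ (if $M\Sigma M\Trans \vec v = 0$ then $\|\Sigma^{1/2}M\Trans \vec v\| = 0$, so $\Sigma M\Trans \vec v = 0$), which gives $\Sigma M\Trans (M\Sigma M\Trans)^{+} (M\Sigma M\Trans) = \Sigma M\Trans$. You can therefore define $K = \Sigma M\Trans (M\Sigma M\Trans)^{+}$ directly and run the identical orthogonality argument: $\operatorname{Cov}(\vec r, \vec y) = \Sigma M\Trans - K M \Sigma M\Trans = 0$, so the conditional mean and covariance formulas hold with the pseudo-inverse, no perturbation and no delicate interchange of limits and conditioning required. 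This also sidesteps the genuinely subtle issue your route would otherwise have to address, namely that weak convergence of joint laws does not in general imply convergence of conditional laws.
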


\subsection{Solution-Based Inference}
\label{sec:bayescg}
To phrase the solution of Eq.~\eqref{eq:system} as a form of probabilistic inference, \citet{Cockayne:2018} consider a Gaussian prior over the solution $\vec x^*$, and condition on observations provided by a set of \emph{search directions} $\vec s_1, \dots, \vec s_m$, $m < d$.
Let $S_m \in \reals^{d \times m}$ be given by $S_m = [\vec s_1, \hdots, \vec s_m]$, and let information be given by $\vec y_m\ce \mat S_m\Trans \A\vec x^*=\mat S_m\Trans\b$. 
Since the information is clearly a linear projection of $\vec{x}^*$, the posterior distribution is a Gaussian distribution on $\vec{x}^*$:

\begin{lemma}[\citet{Cockayne:2018}]\label{lemma:solution_posterior}
Assume that the columns of $S_m$ are linearly independent. Consider the prior
\begin{equation*}
p(\vec x)=\N(\vec x; \vec x_0, \mat \Sigma_0). 
\end{equation*}
The posterior from SBI is then given by
\begin{equation*}
  p(\vec x\mid \vec y_m)=\N(\vec x; \vec x_m, \mat \Sigma_m)
\end{equation*}
where
\begin{align}
\vec x_m &= \vec x_0 + \mat \Sigma_0 \A\Trans \mat S_m
(\mat S_m\Trans \A \mat \Sigma_0 \A\Trans \mat S_m)^{-\!1}\mat S_m\Trans \vec r_0\label{eq:sbi_mean}
\\\mat{\Sigma}_m &= \mat \Sigma_0 - \mat \Sigma_0 \A\Trans \mat S_m(\mat S_m\Trans \A \mat \Sigma_0\A\Trans \mat S_m)^{-\!1}\mat S_m\Trans \mat \Sigma_0,\nonumber
\end{align}
and $\vec r_0=\vec b-A\vec x_0$.
\end{lemma}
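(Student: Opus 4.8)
The plan is to recognise the information $\vec y_m = \mat S_m\Trans \A \vec x^*$ as a noise-free linear observation of the unknown $\vec x^*$, so that the posterior is obtained by a single application of Lemma~\ref{lem:gaussian_conditioning}. Concretely, I would set the observation operator to $\mat M = \mat S_m\Trans \A \in \reals^{m\times d}$, take the offset $\vec z = \vec 0$, and set the noise covariance to $\mat \Lambda = \vec 0$, since the observed values $\vec y_m = \mat S_m\Trans \b$ are exact. The prior mean and covariance entering the lemma are then $\vec x_0$ and $\mat \Sigma_0$, which matches the prior assumed here.

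First I would substitute these choices into the conditional-mean expression of Lemma~\ref{lem:gaussian_conditioning}. Using $\mat M\Trans = \A\Trans \mat S_m$ together with the identity $\vec y_m - \mat M \vec x_0 = \mat S_m\Trans \b - \mat S_m\Trans \A \vec x_0 = \mat S_m\Trans \vec r_0$, where $\vec r_0 = \b - \A \vec x_0$, the general formula collapses directly to \eqref{eq:sbi_mean}. The same substitution in the conditional-covariance formula immediately yields $\mat \Sigma_m$, so the bulk of the argument is routine bookkeeping in matching symbols to the general identity.

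The one point that genuinely requires care is the invertibility of the inner matrix $\mat S_m\Trans \A \mat \Sigma_0 \A\Trans \mat S_m = \mat M \mat \Sigma_0 \mat M\Trans$, since Lemma~\ref{lem:gaussian_conditioning} only guarantees a pseudo-inverse in the noise-free case when this matrix is singular. Here I would argue as follows: because $\A$ is invertible and $\mat \Sigma_0$ is positive definite, the matrix $\A \mat \Sigma_0 \A\Trans$ is positive definite; combined with the hypothesis that the columns of $\mat S_m$ are linearly independent, i.e. $\mat S_m$ has full column rank $m$, the congruence $\mat S_m\Trans (\A \mat \Sigma_0 \A\Trans) \mat S_m$ is positive definite and therefore invertible. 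This justifies replacing the pseudo-inverse with a genuine inverse and confirms that the stated posterior is well defined. The main (and rather modest) obstacle is thus verifying this rank-and-definiteness condition; everything else is a mechanical specialisation of the general Gaussian conditioning result.
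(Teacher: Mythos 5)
Your proposal is correct and is exactly the route the paper takes: the paper introduces this lemma with the observation that $\vec y_m = \mat S_m\Trans \A \vec x^*$ is a linear projection of $\vec x^*$, so the posterior follows by specialising Lemma~\ref{lem:gaussian_conditioning} with $\mat M = \mat S_m\Trans \A$, $\vec z = \vec 0$, $\mat \Lambda = \vec 0$, and simplifying $\vec y_m - \mat M \vec x_0 = \mat S_m\Trans \vec r_0$. Your added check that $\mat S_m\Trans \A \mat \Sigma_0 \A\Trans \mat S_m$ is genuinely invertible (full column rank of $\mat S_m$ plus positive definiteness of $\A \mat \Sigma_0 \A\Trans$) is a sound and worthwhile refinement of the caveat in Lemma~\ref{lem:gaussian_conditioning}, not a departure from the paper's argument.
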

The following proposition establishes an optimality property of the posterior mean $\vec{x}_m$.
This is a relatively well-known property of Gaussian inference, but has not appeared before in the literature on these methods and will prove useful in subsequent sections.

\begin{proposition} \label{prop:generic_solution_optimality}
If $\mathbb{S}_m = \textup{range}(\mat S_m)$, then the posterior mean in
Lemma~\ref{lemma:solution_posterior} satisfies the optimality property
  \begin{equation*}
\vec x_m = \argmin_{\vec x \in \mat \vec x_0 + \Sigma_0 \A\Trans \mathbb{S}_m}
{ \norm{\bm{x} - \bm{x}^*}_{\Sigma_0^{-\!1}}}.
  \end{equation*}
\end{proposition}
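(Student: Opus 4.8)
The plan is to reduce this constrained minimisation to an unconstrained convex quadratic problem in a coefficient vector, solve the resulting normal equations, and recognise the solution as the posterior mean in \cref{eq:sbi_mean}. First I would parametrise the feasible set: every $\vec x \in \vec x_0 + \Sigma_0\A\Trans\mathbb{S}_m$ can be written as $\vec x = \vec x_0 + \Sigma_0\A\Trans S_m\vec c$ for some coefficient vector $\vec c \in \reals^m$, since $\mathbb{S}_m = \textup{range}(S_m)$. Writing $\vec e_0 = \vec x_0 - \vec x^*$, the objective becomes $f(\vec c) = \norm{\vec e_0 + \Sigma_0\A\Trans S_m\vec c}_{\Sigma_0^{-1}}^2$, a quadratic in $\vec c$.

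The key algebraic step is differentiation. Expanding $f(\vec c) = (\vec e_0 + \Sigma_0\A\Trans S_m\vec c)\Trans\Sigma_0^{-1}(\vec e_0 + \Sigma_0\A\Trans S_m\vec c)$ and taking the gradient, the weighting $\Sigma_0^{-1}$ in the norm cancels exactly against one factor of $\Sigma_0$ coming from the parametrisation, giving $\nabla_{\vec c}f = 2\,S_m\Trans\A(\vec e_0 + \Sigma_0\A\Trans S_m\vec c)$. This cancellation is what makes the argument clean and is really the crux: the norm is matched precisely to the subspace direction $\Sigma_0\A\Trans$. Setting the gradient to zero yields the normal equations $S_m\Trans\A\Sigma_0\A\Trans S_m\,\vec c = -\,S_m\Trans\A\vec e_0$. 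The Hessian $2\,S_m\Trans\A\Sigma_0\A\Trans S_m$ is positive definite under the standing assumptions ($\A$ invertible, $\Sigma_0$ positive-definite, $S_m$ of full column rank), so the stationary point is the unique global minimiser.

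The remaining step is to identify this minimiser with $\vec x_m$. Since $\A\vec e_0 = \A\vec x_0 - \b = -\vec r_0$, the right-hand side of the normal equations is $S_m\Trans\vec r_0$, so $\vec c = (S_m\Trans\A\Sigma_0\A\Trans S_m)^{-1}S_m\Trans\vec r_0$. Substituting back into $\vec x = \vec x_0 + \Sigma_0\A\Trans S_m\vec c$ reproduces \cref{eq:sbi_mean} exactly. I anticipate no serious obstacle beyond bookkeeping; the two points requiring care are justifying the invertibility of $S_m\Trans\A\Sigma_0\A\Trans S_m$ (so the minimiser is well defined and unique) and ensuring the $\Sigma_0$--$\Sigma_0^{-1}$ cancellation is stated correctly, after which the substitution $\A\vec e_0 = -\vec r_0$ turns a quantity depending on the unknown $\vec x^*$ into the computable residual $\vec r_0$.
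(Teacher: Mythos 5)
Your proof is correct and takes essentially the same route as the paper's: both parametrise the feasible set as $\vec x_0 + \Sigma_0 \A\Trans S_m \vec c$, reduce the problem to a weighted least-squares minimisation in $\vec c$, and identify its solution $(S_m\Trans \A \Sigma_0 \A\Trans S_m)^{-1} S_m\Trans \vec r_0$ with the coefficient vector in the posterior mean of Lemma~\ref{lemma:solution_posterior}. The only difference is cosmetic: the paper quotes the standard weighted least-squares solution formula, whereas you derive the normal equations explicitly by differentiation and verify positive-definiteness of the Hessian.
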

\begin{proof}
With the abbreviations $X=\Sigma_0A^\top S_m$ and $\bm{y}=\bm{x}^*-\bm{x}_0$
the mean in Lemma~\ref{lemma:solution_posterior} can be written as
\begin{equation*}
\bm{x}_m=\bm{x}_0+X\bm{c}_m, 
\end{equation*}
where
\begin{equation*}
\bm{c}_m=(X^\top \Sigma_0^{-\!1}X)^{-\!1}X^\top \Sigma_0^{-\!1}\bm{y}
\end{equation*}
is the solution of the weighted least squares problem \cite[Section  6.1]{GolubVanLoan2013}
\begin{eqnarray*}
\bm{c}_m &=& \argmin_{\bm{c} \in \reals^m}{\|X\bm{c}-\bm{y}\|_{\Sigma_0^{-\!1}}}\\
&=& \argmin_{\bm{c} \in \reals^m}{\|\bm{x}_0+\Sigma_0 A^\top  S_m\bm{c}-\bm{x}^*\|_{\Sigma_0^{-\!1}}}.
\end{eqnarray*}
This is equivalent to the desired statement.
\qed\end{proof}

\subsection{Matrix-Based Inference} \label{sec:mbi}
In contrast to SBI, the MBI approach of \citet{Hennig2015} treats the matrix inverse $\Ai$ as the unknown in the inference procedure. 
As in the previous section, search directions $S_m$ yield matrix-vector products $Y_m \in \reals^{d\times m}$. 
In \citet{Hennig2015} these arise from \emph{right}-multiplying\footnote{This work also considers a model class that explicitly encodes \emph{symmetry} of $A$, such that the distinction between left- and right- multiplication vanishes. See Section~\ref{sec:cg_in_mbi} and Prop.~\ref{thm:cg_right_multiplied} for more.} $\mat A$ with $\mat S_m$, i.e. $\mat Y_m = \A \mat S_m$. Note that
\begin{equation} \label{eq:mbi_right_information}
  \mat S_m = \Ai \mat Y_m, \text{ or, equivalently } \vecm{\mat S_m} = (I \kr \mat Y_m\Trans) \vecm{\Ai}.
\end{equation} 
Thus $\mat S_m$ is a linear transformation of $\Ai$ and Lemma \ref{lem:gaussian_conditioning} can again be applied:
\begin{lemma}[Lemma 2.1 in \citet{Hennig2015}\footnote{This corrects a printing error in \citet{Hennig2015}. The notation has been adapted to fit the context.}]
\label{lemma:mbi}
Consider the prior
\begin{equation*}
	p\left(\vecm{\Ai}\right)=\N\left(\vecm{\Ai_0}, \mat \Sigma_0 \kr \mat W_0\right).
\end{equation*}
Then the posterior given the observations $\vecm{\mat S_m}= \Ai\mat Y_m$ is given by
\begin{equation*}
	p\left(\vecm{\Ai}\,\middle|\, \vecm{\mat S_m}\right)=\N\left(\vecm{\Ai_m}, \mat \Sigma_0 \kr \mat W_m\right)
\end{equation*}
with
\begin{align*}
\Ai_m &=  \Ai_0 + (\mat S_m - \Ai_0 \mat Y_m)(\mat Y_m\Trans \mat W_0\mat Y_m)^{-\!1}\mat Y_m\Trans\mat W_0 \\
\mat W_m &=  \mat W_0 - \mat W_0\mat Y_m(\mat Y_m\Trans \mat W_0\mat Y_m)^{-\!1}\mat Y_m\Trans \mat W_0.
\end{align*}
\end{lemma}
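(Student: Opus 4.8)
The plan is to read this off as a direct application of the Gaussian conditioning result \cref{lem:gaussian_conditioning} to the \emph{vectorised} inverse $\vecm{\Ai}$. The observation model is precisely the linear map recorded in \eqref{eq:mbi_right_information}, namely $\vecm{\mat S_m} = (I \kr \mat Y_m\Trans)\vecm{\Ai}$, so I identify the latent vector with $\vecm{\Ai} \in \reals^{d^2}$ and the data with $\vecm{\mat S_m} \in \reals^{dm}$ (thus the roles of ``$d$'' and ``$n$'' in \cref{lem:gaussian_conditioning} are played by $d^2$ and $dm$), the prior mean with $\vecm{\Ai_0}$, the prior covariance $\mat \Sigma$ with $\mat \Sigma_0 \kr \mat W_0$, the observation operator $\mat M$ with $I \kr \mat Y_m\Trans$, a vanishing offset $\vec z = 0$, and noise-free observations $\mat \Lambda = 0$. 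With these substitutions the posterior mean and covariance are given verbatim by the formulas of \cref{lem:gaussian_conditioning}; the remaining task is purely Kronecker bookkeeping, namely to show that the resulting $d^2$-dimensional Gaussian again factorises as $\mat \Sigma_0 \kr (\cdot)$ and to extract $\Ai_m$ and $\mat W_m$.

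First I would evaluate the three building blocks entering \cref{lem:gaussian_conditioning} using the mixed-product rule $(A\kr B)(C\kr D)=(AC)\kr(BD)$ and the inversion rule $(A\kr B)^{-\!1}=A^{-\!1}\kr B^{-\!1}$ collected in Appendix~\ref{sec:kron_properties}, which give
\begin{align*}
\mat M\mat \Sigma\mat M\Trans &= \mat \Sigma_0 \kr (\mat Y_m\Trans \mat W_0 \mat Y_m), \\
(\mat M\mat \Sigma\mat M\Trans)^{-\!1} &= \mat \Sigma_0^{-\!1} \kr (\mat Y_m\Trans \mat W_0 \mat Y_m)^{-\!1}, \\
\mat \Sigma\mat M\Trans &= \mat \Sigma_0 \kr (\mat W_0 \mat Y_m).
\end{align*}
The decisive point is that when these are combined the row factor collapses through $\mat \Sigma_0\mat \Sigma_0^{-\!1}=I$: the gain $\mat \Sigma\mat M\Trans(\mat M\mat \Sigma\mat M\Trans)^{-\!1}$ reduces to $I \kr \mat W_0\mat Y_m(\mat Y_m\Trans \mat W_0 \mat Y_m)^{-\!1}$. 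This is the structural reason $\mat \Sigma_0$ is carried through to the posterior unchanged while only the column factor is updated from $\mat W_0$ to $\mat W_m$.

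To finish I would map back from vectorised to matrix form with the row-stacking identity $\vecm{AXB}=(A\kr B\Trans)\vecm{X}$ (the same identity that yields \eqref{eq:mbi_right_information}). Applied to the innovation it gives $\vecm{\mat S_m}-(I\kr \mat Y_m\Trans)\vecm{\Ai_0}=\vecm{\mat S_m-\Ai_0\mat Y_m}$, and applied once more to the gain it turns the mean formula into $\vecm{\Ai_m}$ with $\Ai_m=\Ai_0+(\mat S_m-\Ai_0\mat Y_m)(\mat Y_m\Trans \mat W_0 \mat Y_m)^{-\!1}\mat Y_m\Trans \mat W_0$, where symmetry of $\mat W_0$ and of $\mat Y_m\Trans \mat W_0 \mat Y_m$ is used to transpose the gain factor. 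The same substitutions reduce the posterior covariance to $\mat \Sigma_0 \kr (\mat W_0-\mat W_0\mat Y_m(\mat Y_m\Trans \mat W_0 \mat Y_m)^{-\!1}\mat Y_m\Trans \mat W_0)=\mat \Sigma_0\kr \mat W_m$.

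Since the algebra is mechanical, the main obstacle is conceptual cleanliness rather than difficulty. I must keep the row-stacking convention consistent throughout so that the correct vec identity is applied, and I must be careful about invertibility: the clean cancellation tacitly uses $\mat \Sigma_0\mat \Sigma_0^{-\!1}=I$, so if $\mat \Sigma_0$ (equivalently $\mat M\mat \Sigma\mat M\Trans$) is only positive semi-definite, the pseudo-inverse caveat of \cref{lem:gaussian_conditioning} must be invoked, and one should verify that the surviving expressions, which depend only on the factor $\mat Y_m\Trans \mat W_0 \mat Y_m$, are unaffected. Linear independence of the columns of $\mat S_m$ (hence of $\mat Y_m=\A\mat S_m$, as $\A$ is invertible) together with positive-definiteness of $\mat W_0$ guarantees that $\mat Y_m\Trans \mat W_0 \mat Y_m$ is indeed invertible, so the stated expressions are well defined.
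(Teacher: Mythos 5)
Your proposal is correct and follows exactly the route the paper indicates: it introduces Lemma~\ref{lemma:mbi} by noting that $\vecm{\mat S_m}=(I\kr \mat Y_m\Trans)\vecm{\Ai}$ is a linear transformation of $\Ai$ so that Lemma~\ref{lem:gaussian_conditioning} applies, and your Kronecker bookkeeping (mixed-product and inversion rules, collapse of the gain to $I\kr \mat W_0\mat Y_m(\mat Y_m\Trans\mat W_0\mat Y_m)^{-1}$, and the vec identity to recover $\Ai_m$ and $\mat W_m$) is precisely the computation the paper leaves to \citet{Hennig2015} and mirrors in its own Appendix~\ref{sec:proofs} proof for left-multiplied information. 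Your added remarks on invertibility of $\mat Y_m\Trans\mat W_0\mat Y_m$ and the pseudo-inverse caveat are sound and do not change the argument.
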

For linear solvers, the object of interest is $\vec{x}^*=\Ai \b$. 
Writing $\Ai\b=(\mat I\kr \b\Trans)\vecm{\Ai}$, and again using Lemma~\ref{lem:gaussian_projection}, we see that the associated marginal is also Gaussian, and given by
\begin{equation}
	p(\vec x\mid \mat S, \mat Y)=\N(\vec x; \Ai_m \b, \b\Trans \mat W_m \b \cdot \mat \Sigma_0).
\end{equation}
In the Kronecker product specification for the prior covariance on $\Ai$, the first matrix, here $\Sigma_0$, describes the dependence between the columns of $\Ai$.
The second matrix, $W_0$, captures the dependency between the rows of $\Ai$.
Note that in \cref{lemma:mbi}, the posterior covariance has the form $\Sigma_0 \otimes W_m$.
When compared to the prior covariance, $\Sigma_0 \otimes W_0$, it is clear that the observations have conveyed no new information to the first term of the Kronecker product covariance.

\subsection{Equivalence of MBI and SBI}
\label{sec:equivalence}
In practise \citet{Hennig2015} notes that inference on $\Ai$ should be performed only implicitly, avoiding the $d^2$ storage cost and the mathematical complexity of the operations involved in Lemma~\ref{lemma:mbi}. 
This raises the question of when MBI is equivalent to SBI. 
Although, based on Lemma~\ref{lem:gaussian_projection}, one might suspect SBI and MBI to be equivalent, in fact the posterior from Lemma \ref{lemma:mbi} is structurally different to the posterior in Lemma \ref{lemma:solution_posterior}:
After projecting into solution space, the posterior covariance in Lemma \ref{lemma:mbi} is a scalar multiple of the matrix $\Sigma_0$, which is not the case in general in Lemma \ref{lemma:solution_posterior}.

However, the implied posterior over the solution vector can be made to coincide with the posterior from SBI if one considers observations in MBI as
\begin{equation} \label{eq:mbi_left_info}
	S_m^\top = Y_m^\top A^{-1}.
\end{equation}
That is, as \emph{left}-multiplications of $A$. We will refer to the observation model of Eq.~\eqref{eq:mbi_right_information} as \emph{right-multiplied information}, and to Eq.~\eqref{eq:mbi_left_info} as \emph{left-multiplied information}.

\begin{proposition} \label{prop:matrix_prior_equivalent_solution_prior}
  Consider a Gaussian MBI prior 
  \begin{equation*}p(\A^{-1})= \mathcal{N}(\A^{-1};\vecm{\mat{A_0^{-1}}}, \mat\Sigma_0 \otimes \mat W_0),
  \end{equation*}
  conditioned on the left-multiplied information of Eq.~\eqref{eq:mbi_left_info}. 
  The associated marginal on $\bm{x}$ is identical to the posterior on $\bm{x}$ arising in Lemma~\ref{lemma:solution_posterior} from $p(\bm{x})= \mathcal{N}(\bm{x};\bm{x}_0, \Sigma_0)$ under the conditions
		\[A_0^{-1} \bm{b} = \bm{x}_0 \quad\text{and}\quad\bm{b}^\top W_0 \bm{b} = 1.\]
\end{proposition}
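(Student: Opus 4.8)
The plan is to reduce the claim to the two Gaussian identities already on hand: condition the vectorised inverse $\vecm{A^{-1}}$ on the left-multiplied information using \cref{lem:gaussian_conditioning}, project the result onto $\bm{x}=A^{-1}\b$ using \cref{lem:gaussian_projection}, and then read the two stated conditions as precisely what is needed to match \cref{lemma:solution_posterior}. First I would vectorise \eqref{eq:mbi_left_info}: with the row-stacking identity $\textup{vec}(BXC)=(B\kr C\Trans)\,\textup{vec}(X)$ --- the same one producing the right-multiplied form $\vecm{S_m}=(I\kr Y_m\Trans)\vecm{A^{-1}}$ of \eqref{eq:mbi_right_information} --- the constraint $S_m\Trans=Y_m\Trans A^{-1}$ becomes
\begin{equation*}
\vecm{S_m\Trans}=(Y_m\Trans\kr I)\,\vecm{A^{-1}}.
\end{equation*}
The essential difference from right-multiplied information is that the data-dependent factor $Y_m\Trans$ now occupies the \emph{first} slot of the Kronecker product rather than the second; I would also record $Y_m=A\Trans S_m$, which is immediate from $S_m\Trans=Y_m\Trans A^{-1}$.

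Next I would apply \cref{lem:gaussian_conditioning} to the prior $\N(\vecm{A_0^{-1}},\Sigma_0\kr W_0)$ with $M=Y_m\Trans\kr I$ and $\Lambda=0$. Using $(P\kr Q)(R\kr T)=PR\kr QT$ and $(P\kr Q)^{-1}=P^{-1}\kr Q^{-1}$ (Appendix~\ref{sec:kron_properties}), the Gram matrix collapses to $M(\Sigma_0\kr W_0)M\Trans=(Y_m\Trans\Sigma_0 Y_m)\kr W_0$ and the gain to $\Sigma_0 Y_m(Y_m\Trans\Sigma_0 Y_m)^{-1}\kr I$, so the $W_0$ factors cancel. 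De-vectorising yields the posterior mean $A_m^{-1}=A_0^{-1}+\Sigma_0 Y_m(Y_m\Trans\Sigma_0 Y_m)^{-1}(S_m\Trans-Y_m\Trans A_0^{-1})$ and a covariance that factorises as
\begin{equation*}
\widetilde{\Sigma}_m\kr W_0,\qquad \widetilde{\Sigma}_m=\Sigma_0-\Sigma_0 Y_m(Y_m\Trans\Sigma_0 Y_m)^{-1}Y_m\Trans\Sigma_0 .
\end{equation*}
In contrast to \cref{lemma:mbi}, the update now lands on the \emph{first} Kronecker factor while $W_0$ is left untouched --- precisely the factor that survives projection to solution space.

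Finally I would marginalise onto $\bm{x}=A^{-1}\b=(I\kr\b\Trans)\vecm{A^{-1}}$ via \cref{lem:gaussian_projection}. The mean maps to $A_m^{-1}\b$ and the covariance to $(I\kr\b\Trans)(\widetilde{\Sigma}_m\kr W_0)(I\kr\b)=(\b\Trans W_0\b)\,\widetilde{\Sigma}_m$. The condition $\b\Trans W_0\b=1$ removes the scalar, and substituting $Y_m=A\Trans S_m$ turns $\widetilde{\Sigma}_m$ into the covariance of \cref{lemma:solution_posterior}. The same substitution identifies the gain factor of the mean with $\Sigma_0 A\Trans S_m(S_m\Trans A\Sigma_0 A\Trans S_m)^{-1}$, while $A_0^{-1}\b=\bm{x}_0$ collapses the innovation through $(S_m\Trans-Y_m\Trans A_0^{-1})\b=S_m\Trans\b-S_m\Trans A\bm{x}_0=S_m\Trans\vec r_0$, reproducing \eqref{eq:sbi_mean}.

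The main obstacle is organisational rather than deep: one must track which slot of the Kronecker product absorbs the update and confirm the cancellation of the $W_0$ factors, since it is exactly this cancellation together with $\b\Trans W_0\b=1$ that removes the spurious scalar and the ``wrong'' Kronecker factor otherwise separating MBI from SBI. A minor additional point is invertibility: I would assume $\Sigma_0,W_0\succ0$ and $S_m$ of full column rank, so that $Y_m\Trans\Sigma_0 Y_m=S_m\Trans A\Sigma_0 A\Trans S_m$ is nonsingular, matching the hypotheses of \cref{lemma:solution_posterior}; otherwise the pseudo-inverse caveat of \cref{lem:gaussian_conditioning} applies.
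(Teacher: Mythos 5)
Your proposal is correct and follows essentially the same route as the paper's own proof: vectorise the left-multiplied constraint as $(Y_m\Trans\kr I)\vecm{A^{-1}}=\vecm{S_m\Trans}$, condition via the Kronecker identities so that the $W_0$ factors cancel in the gain, project with $(I\kr\b\Trans)$, and invoke $A_0^{-1}\b=\bm{x}_0$ and $\b\Trans W_0\b=1$ to match Lemma~\ref{lemma:solution_posterior}. Your added remark on full column rank of $S_m$ and positive definiteness of $\Sigma_0$, $W_0$ is a sensible (if implicit in the paper) tightening of the hypotheses.
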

\begin{proof}
See Appendix~\ref{sec:proofs}.
\qed\end{proof}
The first of the two conditions requires that the prior mean on the matrix inverse be consistent with the prior mean on the solution, which is natural. 
The second condition demands that, after projection into solution space, the relationship between the rows of $\Ai$ modelled by $W_0$ does not inflate the covariance $\Sigma_0$.
Note that this condition is trivial to enforce for an arbitrary covariance $\bar{\mat{W}_0}$ by setting $W_0 = (\vec b\Trans\bar{\mat W_0} \vec b)^{-1} \bar{\mat{W}_0}$.

\subsection{Remarks}
The result in Proposition~\ref{prop:matrix_prior_equivalent_solution_prior} shows that any result proven for SBI applies immediately to MBI with left-multiplied observations.
Though MBI has more model parameters than SBI, there are situations in which this point of view is more appropriate.
Unlike in SBI, the information obtained in MBI need not be specific to a particular solution vector $\bm{x}^*$ and thus can be propagated and recycled over several linear problems, similar to the notion of subspace recycling \citep{Soodhalter2014}. Secondly, MBI is able to utilise both left- and right-multiplied information, while SBI is restricted to left-multiplied information. This additional generality may prove useful in some applications.

\section{Projection Methods as Inference}
This section discusses a connection between probabilistic numerical methods for linear systems and the classic framework of projection methods for the iterative solution of linear problems. \cref{sec:projection_methods} reviews this established class of solvers, while \cref{sec:projection_connection} presents the novel results.

\subsection{Background} \label{sec:projection_methods}
Many iterative methods for linear systems, including CG and GMRES, belong to the class 
of projection methods \citep[p.~130f.]{Saad2003iterative}. 
Saad describes a projection method as an iterative scheme in which, at each iteration, a solution vector $\vec x_m$ is constructed by projecting $\vec x^*$ into a solution space $\mathbb{X}_m\subset \reals^d$, subject to the restriction that the residual $\vec r_m = \vec b - \A \vec x_m$ is orthogonal to a constraint space $\mathbb{U}_m\subset \reals^d$.

More formally, each iteration of a projection method is defined by two matrices
$X_m, U_m \in \reals^{d\times m}$, and by a starting point $\vec x_0$.
The matrices $X_m$ and $U_m$ each encode the solution and constraint spaces as $\mathbb{X}_m=\mathrm{range}(X_m)$ and $\mathbb{U}_m=\mathrm{range}(U_m)$.
The projection method then constructs $\vec x_m$ as 
$\vec x_m = \vec x_0 + X_m\vec\alpha_m$ with $\vec \alpha_m\in\reals^m$ determined
by the constraint $U_m^\top\vec r_m = \vec 0$. 
This is possible only if $\mat U_m\Trans \A\mat X_m$ is nonsingular, in which case one obtains
\begin{align} \label{eq:projection_step}
	\vec \alpha_m &= (U_m\Trans \A \mat X_m)^{-1} U_m\Trans \vec r_0, \text{ and thus}\\
\label{eq:projection_method_x}
\vec x_m &= \vec x_0 + \mat X_m (\mat U_m\Trans \A \mat X_m)^{-1} U_m\Trans \vec r_0.
\end{align}
From this perspective CG and GMRES perform only a single step with the number of iterations $m$ fixed and determined in advance. 
For CG the spaces are $\mathbb{U}_m = \mathbb{X}_m = K_m(\A, \b)$, while for GMRES they are $\mathbb{X}_m=K_m(\A, \b)$ and $\mathbb{U}_m=\A K_m(\A, \b)$ \citep[Proposition 5.1]{Saad2003iterative}.

\subsection{Probabilistic Perspectives} \label{sec:projection_connection}
In this section we first show, in Proposition~\ref{prop:matrix_prior_implies_projection_method}, that the conditional mean from SBI after $m$ steps corresponds to some projection method.
Then, in Proposition~\ref{thm:generic_projection_method_replication} we prove the converse: that each projection method is also the posterior mean of a probabilistic method, for some prior covariance and choice of information.

\begin{proposition}\label{prop:matrix_prior_implies_projection_method}
Let the columns of $S_m$ 
be linearly independent.
Consider SBI under the prior 
\begin{equation*} 
p(\vec x)=\mathcal{N}(\vec x_0, \mat\Sigma_0),
\end{equation*} 
and with observations $\vec y_m=\mat S_m\Trans\vec b$.
Then the posterior mean $\vec x_m$ in Lemma~\ref{lemma:solution_posterior}
is identical to the iterate from a projection method
defined by the matrices  $U_m=S_m$ and $X_m=\mat\Sigma_0\A\Trans S_m$, and 
the starting vector~$\vec x_0$.
\end{proposition}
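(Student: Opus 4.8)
The plan is to prove the claim by direct substitution, since both the projection iterate and the SBI posterior mean are already available in closed form. Concretely, I would insert the prescribed matrices $U_m = S_m$ and $X_m = \Sigma_0 A\Trans S_m$ into the generic projection-method formula~\eqref{eq:projection_method_x} and verify that the result coincides, term for term, with the SBI mean~\eqref{eq:sbi_mean}.

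First I would write the projection iterate as $\vec x_m = \vec x_0 + X_m(U_m\Trans A X_m)^{-1} U_m\Trans \vec r_0$ and the SBI mean as $\vec x_m = \vec x_0 + \Sigma_0 A\Trans S_m (S_m\Trans A \Sigma_0 A\Trans S_m)^{-1} S_m\Trans \vec r_0$, noting that both methods use the same starting point $\vec x_0$ and hence the same residual $\vec r_0 = \vec b - A\vec x_0$. Substituting $U_m\Trans = S_m\Trans$ in the two outer factors and $X_m = \Sigma_0 A\Trans S_m$ in the leading factor and inside the inverse turns the middle term into $(S_m\Trans A \Sigma_0 A\Trans S_m)^{-1}$. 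The two expressions are then literally identical, which establishes the equality of the iterates.

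The only point requiring a word of justification is that the projection method is actually well-defined, i.e.\ that $U_m\Trans A X_m = S_m\Trans A \Sigma_0 A\Trans S_m$ is nonsingular. But this is precisely the matrix whose inverse appears in Lemma~\ref{lemma:solution_posterior}, so its invertibility is already guaranteed under the hypotheses of that lemma. Equivalently, since $A$ is invertible and the columns of $S_m$ are linearly independent, $A\Trans S_m$ has full column rank, and $(A\Trans S_m)\Trans \Sigma_0 (A\Trans S_m)$ is then invertible whenever $\Sigma_0$ is positive definite. Thus the existence condition for the projection method coincides with the conditioning assumption already in force.

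I expect no genuine obstacle here: the statement is an algebraic identification rather than a derivation, and the whole argument reduces to recognising that \eqref{eq:projection_method_x} and \eqref{eq:sbi_mean} are the same expression once $U_m$ and $X_m$ are specialised as above. The only mild care needed is in matching the well-posedness conditions across the two frameworks, as noted.
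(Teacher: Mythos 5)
Your proof is correct and follows essentially the same route as the paper's: a direct substitution of $U_m = S_m$ and $X_m = \Sigma_0 A\Trans S_m$ showing that Eq.~\eqref{eq:projection_method_x} and Eq.~\eqref{eq:sbi_mean} coincide. Your additional remark on the nonsingularity of $U_m\Trans A X_m$ is a sensible, if implicit in the paper, clarification of well-posedness.
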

\begin{proof} 
Substituting $U_m = S_m$ and $X_m = \mat \Sigma_0 \A\Trans S_m$ 
into Lemma~\ref{lemma:solution_posterior} gives Eq.~\eqref{eq:projection_method_x}, as required.
\qed\end{proof}
The converse to this also holds:
\begin{proposition}\label{thm:generic_projection_method_replication}
Consider a projection method defined by the matrices 
$X_m,U_m\in\reals^{d\times m}$, each with linearly independent columns, and the starting vector
$\bm{x}_0 \in \reals^d$.
Then the iterate $\vec x_m$ in Eq.~\eqref{eq:projection_method_x} is identical to the SBI posterior mean in Lemma~\ref{lemma:solution_posterior} under the prior
\begin{align}
	\label{eq:replicating_prior}
	p(\vec x)=\N(\vec x; \vec x_0, X_mX_m^\top )
\end{align}
when search directions $S_m = U_m$ are used.
\end{proposition}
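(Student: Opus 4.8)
The plan is to substitute the replicating prior of Eq.~\eqref{eq:replicating_prior} directly into the SBI posterior-mean formula and show that the resulting expression collapses to the projection-method iterate of Eq.~\eqref{eq:projection_method_x}. Concretely, setting $\mat\Sigma_0 = X_m X_m\Trans$ and $\mat S_m = U_m$ in Eq.~\eqref{eq:sbi_mean} yields
\[
\vec x_m = \vec x_0 + X_m X_m\Trans \A\Trans U_m \bigl(U_m\Trans \A X_m X_m\Trans \A\Trans U_m\bigr)^{-1} U_m\Trans \vec r_0 .
\]
The first task is to confirm that this is well-defined, i.e.\ that the indicated $m\times m$ inverse exists, and that Lemma~\ref{lemma:solution_posterior} applies even though the prior covariance $X_m X_m\Trans$ is only positive semidefinite (its rank is at most $m<d$).

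The key algebraic observation is to abbreviate $P \ce U_m\Trans \A X_m \in \reals^{m\times m}$. This matrix is square, and it is nonsingular precisely by the standing assumption of the projection method (recall Eq.~\eqref{eq:projection_step} requires $\mat U_m\Trans \A \mat X_m$ to be invertible). Then $U_m\Trans \A X_m X_m\Trans \A\Trans U_m = P P\Trans$, which is invertible exactly because $P$ is, so no pseudo-inverse is needed. Using in addition $X_m\Trans \A\Trans U_m = P\Trans$, the entire correction term reduces to $X_m\, P\Trans (P P\Trans)^{-1} U_m\Trans \vec r_0$. Finally, since $P$ is square and invertible, $P\Trans (P P\Trans)^{-1} = P\Trans (P\Trans)^{-1} P^{-1} = P^{-1} = (U_m\Trans \A X_m)^{-1}$, so the iterate becomes $\vec x_0 + X_m (U_m\Trans \A X_m)^{-1} U_m\Trans \vec r_0$, which is exactly Eq.~\eqref{eq:projection_method_x}.

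I expect the only genuine subtlety to be bookkeeping around the semidefinite prior covariance: one must note that Lemma~\ref{lemma:solution_posterior} remains valid here because it descends from Lemma~\ref{lem:gaussian_conditioning}, which explicitly permits a positive-semidefinite $\mat\Sigma$, and because the sole matrix that must actually be inverted, $P P\Trans$, is nonsingular by the projection-method assumption (together with the hypothesis that $U_m$ and $X_m$ have linearly independent columns). The collapse $P\Trans (P P\Trans)^{-1} = P^{-1}$ is the crux of the argument and relies essentially on $P$ being \emph{square}, which in turn uses that $X_m$ and $U_m$ share the same number of columns $m$; this is what distinguishes the converse from the forward direction of Proposition~\ref{prop:matrix_prior_implies_projection_method}.
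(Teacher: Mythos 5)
Your proof is correct and is essentially the paper's own argument run in the opposite direction: where the paper abbreviates $Z = X_m\Trans A\Trans U_m$ and expands $Z^{-\top} = Z(Z\Trans Z)^{-1}$ to turn the projection iterate into the SBI mean, you abbreviate $P = U_m\Trans A X_m = Z\Trans$ and collapse $P\Trans(PP\Trans)^{-1} = P^{-1}$ to turn the SBI mean into the projection iterate, which is the identical key identity. Your added remarks on the semidefiniteness of $X_mX_m\Trans$ and the nonsingularity of $PP\Trans$ are sound and make explicit a point the paper leaves implicit, but they do not change the substance of the argument.
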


\begin{proof}
Abbreviate $Z=X_m^\top A^\top U_m$ and 
write the projection method iterate from Eq.~\eqref{eq:projection_method_x} as
\begin{eqnarray*}
\vec x_m = \vec x_0+ X_m Z^{-T} U_m^\top \vec r_0.
\end{eqnarray*} 
Multiply the middle matrix by the identity,
\begin{eqnarray*}
Z^{-T}& = &ZZ^{-1}Z^{-T}=Z(Z^\top Z)^{-1}\\
&=&X_m^\top A^\top U_m(U_m^\top A\Sigma_0 A^\top U_m)^{-1},
\end{eqnarray*}
and insert this into the expression for $\vec x_0$,
 \begin{eqnarray*}
\vec x_m = \vec x_0+ \Sigma_0A^\top U_m(U_m^\top A\Sigma_0 A^\top U_m)^{-1}U_m^\top \vec r_0.
\end{eqnarray*} 
Setting $U_m=S_m$ gives the mean in Lemma~\ref{lemma:solution_posterior}.
\qed\end{proof}

Including a basis of the solution space in the prior may seem problematic.
A direct way to enforce the posterior occupying the solution space is by placing a prior on the coefficients $\vec \alpha$ in $\vec x = \vec x_0 + \mat X_m \vec \alpha$.
Under a unit Gaussian prior $\vec \alpha \sim \N(\vec 0, \mat I)$, the implied prior on $\vec x$ naturally has the form of Eq.~\eqref{eq:replicating_prior}.
However, this prior is nevertheless unsatisfying both since it requires the solution space to be specified \emph{a-priori}, precluding adaptivity in the algorithm, and, perhaps more worryingly, because the posterior uncertainty over the solution is a matrix of zeros even though the solution is not fully identified.
Again taking $Z = X_m\Trans A\Trans U_m$:
\begin{align*}
\mat \Sigma_m&=\mat \Sigma_0 - \mat \Sigma_0 \A\Trans U_m(U_m\Trans \A \mat \Sigma_0 \A\Trans U_m)^{-\!1}\mat U_m\Trans A \mat \Sigma_0
\\ &= \mat X_m\mat X_m\Trans - \mat X_m \mat Z(\mat Z\Trans \mat Z)^{-1}\mat Z\Trans \mat X_m\Trans
\\ &=  \mat X_m\mat X_m\Trans - \mat X_m\mat X_m\Trans
\\ &= \mat 0 .
\end{align*}
\citep{Hennig2015} and \citep{Bartels:2016eh} each proposed to address this issue by adding additional uncertainty in the null space of $X_m$. 
This empirical uncertainty calibration step has not yet been analysed in detail. 
Such analysis is left for future work.
Nevertheless, the proposition provides a probabilistic view for \emph{arbitrary} projection methods and does not require knowledge of $\Ai$, unlike some of the results presented in \citep{Hennig2015,Cockayne:2017} and in the following propositions.

This prior is not unique.
The next proposition establishes more restrictive conditions under which a projection method may have a probabilistic interpretation and still result in a nonzero posterior uncertainty.

\begin{proposition} \label{prop:restrictive_projection_method_prior}
Consider a projection method defined by $X_m, U_m\in\reals^{d\times m}$ and the starting vector $\bm{x}_0$.
Further suppose that $U_m = R X_m$ for some invertible $R \in \reals^{d\times d}$, and that $A^\top R$ is symmetric positive-definite.
Then under the prior
\begin{equation*}
p(\vec x)=\N\left(\vec x; \vec x_0, (A^\top R)^{-1} \right) 
\end{equation*}
and the search directions $S_m = U_m = R X_m$, the iterate in the projection method is identical to the posterior mean in Lemma~\ref{lemma:solution_posterior}.
\end{proposition}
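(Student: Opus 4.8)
The plan is to show that the SBI posterior mean of \cref{lemma:solution_posterior}, specialised to the prior covariance $\Sigma_0 = (A\Trans R)^{-1}$ and search directions $S_m = U_m = R X_m$, collapses algebraically onto the projection-method iterate in \cref{eq:projection_method_x}. This is a direct substitution into \cref{eq:sbi_mean}, and the whole argument rests on two cancellations enabled by the choice $\Sigma_0 = (A\Trans R)^{-1}$ together with the symmetry of $A\Trans R$.

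First I would simplify the leftmost factor of \cref{eq:sbi_mean}. Because $\Sigma_0 A\Trans S_m = (A\Trans R)^{-1} A\Trans (R X_m) = (A\Trans R)^{-1}(A\Trans R) X_m = X_m$, the prior covariance and the transposed system matrix cancel exactly, replacing $\Sigma_0 A\Trans S_m$ by $X_m$. This is precisely the factor $X_m$ that appears on the left in \cref{eq:projection_method_x}.

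Next I would simplify the Gram matrix in the middle. Using the same cancellation inside $S_m\Trans A \Sigma_0 A\Trans S_m$ yields $X_m\Trans R\Trans A X_m = X_m\Trans (A\Trans R)\Trans X_m$, and here I would invoke the hypothesis that $A\Trans R$ is symmetric to rewrite it as $X_m\Trans A\Trans R X_m$. On the other hand the projection-method Gram matrix is $U_m\Trans A X_m = X_m\Trans R\Trans A X_m$, which equals the same quantity by the same symmetry. Since the search directions satisfy $S_m = U_m$, the residual factor $S_m\Trans \vec r_0 = U_m\Trans \vec r_0$ is common to both expressions. Assembling the three factors then shows the SBI mean and the projection iterate coincide term by term.

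I do not expect a genuine obstacle; the computation is short once the cancellation $\Sigma_0 A\Trans S_m = X_m$ is spotted. The only points requiring care are the well-definedness conditions. Positive-definiteness of $A\Trans R$ plays a dual role: it guarantees that $(A\Trans R)^{-1}$ is a legitimate symmetric positive-definite prior covariance, and---together with the linear independence of the columns of $X_m$ (equivalently of $S_m = R X_m$, since $R$ is invertible)---it ensures that $X_m\Trans A\Trans R X_m$ is invertible, so that both the SBI posterior of \cref{lemma:solution_posterior} and the projection step \cref{eq:projection_step} are defined. It is the \emph{symmetry} of $A\Trans R$, and not mere invertibility, that allows the two Gram matrices to be identified.
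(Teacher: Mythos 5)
Your proof is correct and is essentially the paper's argument run in reverse: the paper substitutes $X_m = R^{-1}U_m$ into the projection iterate and inserts $A^{-\top}A^\top$ to manufacture the factor $\Sigma_0 A^\top U_m$, whereas you substitute $\Sigma_0 = (A^\top R)^{-1}$ into the SBI mean and cancel, which is the same identity used in the opposite direction. One small correction: both Gram matrices, $S_m^\top A \Sigma_0 A^\top S_m$ and $U_m^\top A X_m$, reduce to the identical expression $X_m^\top R^\top A X_m$ with no appeal to symmetry, so the symmetry of $A^\top R$ is needed only so that $(A^\top R)^{-1}$ is a legitimate Gaussian prior covariance (and, with positive-definiteness, so that the Gram matrix is invertible), not for identifying the two Gram matrices as you claim in your final paragraph.
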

\begin{proof}
First substitute $X_m=R^{-1}U_m$ into Eq.~\eqref{eq:projection_method_x} to obtain
\begin{align*} 
  &\vec x_m = \vec x_0 + \mat R^{-1} U_m (\mat U_m\Trans \A \mat R^{-1} U_m)^{-1} U_m\Trans \vec r_0 \\
  &= \vec x_0 + \mat R^{-1} \A^{-\top} \A\Trans \mat U_m (\mat U_m^\top \A \mat R^{-1} \A^{-\top} \A\Trans \mat U_m)^{-1} \mat U_m^\top \vec r_0 \\
  &= \vec x_0 + \Sigma_0 \A\Trans \mat U_m (\mat U_m^\top \A \Sigma_0 \A\Trans \mat U_m)^{-1} \mat U_m^\top \vec r_0.
\end{align*}
The third line uses $\Sigma_0 = (\A\Trans R)^{-1} = R^{-1}A^{-T}$.
This is equivalent to the posterior mean in Eq.~\eqref{eq:sbi_mean} with $S_m = U_m$.
\qed\end{proof}
A corollary which provides further insight arises when one considers the \emph{polar decomposition} of $A$.
Recall that an invertible matrix $A$ has a unique polar decomposition $A = PH$, where $P \in \reals^{d \times d}$ is orthogonal and $H \in \reals^{d\times d}$ is symmetric positive-definite.
\begin{corollary} \label{corr:polar}
  Consider a projection method defined by $X_m, U_m\in\reals^{d\times m}$ and the starting vector $\bm{x}_0$, and suppose that $U_m = P X_m$, where $P$ arises from the polar decomposition $A = PH$.
  Then under the prior
  \begin{equation*}
  p(\vec x)=\N\left(\vec x; \vec x_0, H^{-1}\right)
  \end{equation*}
  and the search directions $S_m = U_m = P X_m$, the iterate in the projection method is identical to the posterior mean in Lemma~\ref{lemma:solution_posterior}.
\end{corollary}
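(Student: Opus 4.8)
The plan is to recognize Corollary~\ref{corr:polar} as a direct specialization of Proposition~\ref{prop:restrictive_projection_method_prior} obtained by the choice $R = P$. To invoke that proposition I must check its two hypotheses for this particular $R$: that $P$ is invertible, and that $A\Trans P$ is symmetric positive-definite.

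The invertibility is immediate, since $P$ is orthogonal and hence $P\Trans P = I$. For the second hypothesis I would substitute the polar factorization $A = PH$ and compute $A\Trans P = (PH)\Trans P = H\Trans P\Trans P = H$, using the symmetry of $H$ and the orthogonality of $P$. Because the polar decomposition guarantees that $H$ is symmetric positive-definite, the required condition on $A\Trans P$ holds; moreover the prior covariance of Proposition~\ref{prop:restrictive_projection_method_prior} becomes $(A\Trans R)^{-1} = (A\Trans P)^{-1} = H^{-1}$, which is exactly the covariance stated in the corollary.

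With both hypotheses verified and the prior covariance matched, I would simply apply Proposition~\ref{prop:restrictive_projection_method_prior}: taking search directions $S_m = U_m = P X_m$, the projection-method iterate coincides with the posterior mean of Lemma~\ref{lemma:solution_posterior} under the prior $\N(\vec x; \vec x_0, H^{-1})$. The only step requiring any care is the identity $A\Trans P = H$ together with the positive-definiteness of $H$, but this is a routine consequence of the defining properties of the polar decomposition, so I expect no genuine obstacle.
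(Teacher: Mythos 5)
Your proposal is correct and follows the same route as the paper's own proof: specialize Proposition~\ref{prop:restrictive_projection_method_prior} with $R = P$, use orthogonality of $P$ and the identity $A\Trans P = H\Trans P\Trans P = H$ from the polar decomposition to verify that $A\Trans R$ is symmetric positive-definite, and read off the prior covariance $(A\Trans P)^{-1} = H^{-1}$. No gaps.
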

\begin{proof}
  This follows from Proposition~\ref{prop:restrictive_projection_method_prior}.
  Setting $R = P$ aligns the search directions in Corollary~\ref{corr:polar} with those in Proposition~\ref{prop:restrictive_projection_method_prior}.
  Since $P$ is orthogonal, $P^{-1} = P^\top$, and since $H$ is symmetric positive-definite, $A^\top P = P^\top A = H$ by definition of the polar decomposition, which gives the prior covariance required for Proposition~\ref{prop:restrictive_projection_method_prior}.
\qed\end{proof}
This is an intuitive analogue of similar results in \cite{Hennig2015} and \cite{Cockayne:2017} which show that CG is recovered under certain conditions involving a prior $\Sigma_0 = \Ai$.
When $\A$ is not symmetric and positive definite it cannot be used as a prior covariance.
This corollary suggests a natural way to select a prior covariance still linked to the linear system, though this choice is still not computationally convenient.
Furthermore, in the case that $\A$ is symmetric positive-definite, this recovers the prior which replicates CG described in \cite{Cockayne:2018}.
Note that each of $H$ and $P$ can be stated explicitly as $H = (\A\Trans\A)^\frac{1}{2}$ and $P = \A(\A\Trans\A)^{-\frac{1}{2}}$.
Thus in the case of symmetric positive-definite $A$ we have that $H = \A$ and $P = I$, so that the prior covariance $\Sigma_0 = A^{-1}$ arises naturally from this interpretation.

\section{Preconditioning} \label{sec:preconditioning}
This section discusses probabilistic views on preconditioning.
Preconditioning is a widely-used technique accelerating the convergence of iterative methods \citep[Sections 9 and 10]{Saad2003iterative}.
A preconditioner~$P$ is a nonsingular matrix satisfying two requirements: 
\begin{enumerate}
  \item Linear systems $Pz=c$ can be solved at low computational cost (i.e.~``analytically'')
  \item $P$ is ``close'' to $A$ in some sense.
\end{enumerate}
In this sense, solving systems based upon a preconditioner can be viewed as approximately inverting $A$, and indeed many preconditioners are constructed based upon this intuition. 
One distinguishes between \emph{right preconditioners} $P_r$
and \emph{left preconditioners} $P_l$, depending on whether they act on $A$ from the left or the right. 
Two-sided preconditioning with nonsingular matrices $P_l$ and $P_r$ transforms implicitly Eq.~\eqref{eq:system} into a new linear problem
\begin{equation} \label{eq:system_preconditioned}
	\mat P_l \A \mat P_r \,\vec z^*=\mat P_l \b, \qquad \text{with}\quad \vec x^*=P_r\vec z^*.
\end{equation}
The preconditioned system can then be solved using arbitrary projection methods as described in Section~\ref{sec:projection_methods}, from the starting point $\vec z_0$ defined by $\vec x_0 = \mat P_r \vec z_0$.
The probabilistic view can be used to create a nuanced description of preconditioning as a form of prior information.
In the SBI framework, Proposition~\ref{prop:rightP} below shows that solving a right-preconditioned system is equivalent to modifying the prior, while in Proposition~\ref{prop:leftP} shows that left-preconditioning is equivalent to making a different choice of observations.

\begin{proposition}[Right preconditioning]\label{prop:rightP}
Consider the right-preconditioned system
\begin{equation}\label{eq:rightP}
 \A \mat P_r \vec z^*= \b \qquad \text{where} \quad \vec x^* = \mat P_r \vec z^*.
 \end{equation}
SBI on Eq.~\eqref{eq:rightP} under the prior
\begin{equation}\label{eq:rightP_prior}
\vec z\sim\N(\vec z; \vec z_0, \Sigma_0)
\end{equation}
is equivalent to solving Eq.~\eqref{eq:system} under the prior
\begin{equation*}
\vec x \sim \N(\vec x; \mat P_r\vec z_0, \mat P_r \mat \Sigma_0 \mat P_r\Trans) . 
\end{equation*}
\end{proposition}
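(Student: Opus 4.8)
The plan is to let the two Gaussian lemmas do the work and reduce the claim to a single conjugation identity. First I would apply Lemma~\ref{lemma:solution_posterior} directly to the preconditioned system \eqref{eq:rightP}, treating $A P_r$ as the system matrix and $\N(\vec z; \vec z_0, \Sigma_0)$ as the prior. The observations are the same in both inferences: with search directions $S_m$ one has $S_m\Trans (A P_r)\vec z^* = S_m\Trans \vec b$ for the $z$-system and $S_m\Trans A \vec x^* = S_m\Trans \vec b$ for the $x$-system, so the data vector $S_m\Trans \vec b$ is literally identical, which is what makes the two posteriors comparable. Lemma~\ref{lemma:solution_posterior} then returns a Gaussian posterior $\N(\vec z; \vec z_m, \Sigma_m^{(z)})$ in which $A P_r$ occupies every slot previously held by the system matrix.

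Next I would transport this posterior into solution space through the deterministic map $\vec x = P_r \vec z$. Applying Lemma~\ref{lem:gaussian_projection} with $M = P_r$ and zero offset shows the induced distribution on $\vec x$ is Gaussian with mean $P_r \vec z_m$ and covariance $P_r \Sigma_m^{(z)} P_r\Trans$. This pushforward is the object to which ``equivalent'' refers, and it is what must be matched against a second, independent application of Lemma~\ref{lemma:solution_posterior}---this time to the original system \eqref{eq:system} with prior mean $P_r \vec z_0$ and prior covariance $\bar\Sigma_0 = P_r \Sigma_0 P_r\Trans$, yielding $(\bar{\vec x}_m, \bar\Sigma_m)$.

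The comparison is then driven by the identity $(A P_r)\Sigma_0 (A P_r)\Trans = A (P_r \Sigma_0 P_r\Trans) A\Trans = A \bar\Sigma_0 A\Trans$, so the central Gram matrix $S_m\Trans A P_r \Sigma_0 P_r\Trans A\Trans S_m$ that has to be inverted is common to both inferences. The residuals also agree, since $\vec b - A(P_r \vec z_0) = \vec b - (A P_r)\vec z_0$. Conjugating the $z$-posterior mean and covariance by $P_r$ then reproduces $(\bar{\vec x}_m, \bar\Sigma_m)$ term by term, the one manipulation being that left-multiplying the cross term $\Sigma_0 (A P_r)\Trans = \Sigma_0 P_r\Trans A\Trans$ by $P_r$ turns it into $\bar\Sigma_0 A\Trans$.

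I expect the only real obstacle to be bookkeeping: tracking where each factor of $P_r$ lands as it is pushed through the posterior formulas, and checking that both the leading and trailing factors of the covariance update transform consistently so that $P_r \Sigma_m^{(z)} P_r\Trans$ collapses exactly onto $\bar\Sigma_m$. There is no conceptual difficulty and no invertibility subtlety, since $P_r$ is nonsingular and $S_m$ retains linearly independent columns, so every Gram matrix that appears stays invertible and the two Gaussians coincide in both mean and covariance.
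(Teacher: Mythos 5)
Your proposal is correct, and its computational core is the same as the paper's: both proofs reduce the claim to substituting $B = A P_r$, $\bar\Sigma_0 = P_r \Sigma_0 P_r\Trans$ into the posterior formulas of Lemma~\ref{lemma:solution_posterior} and exploiting the identities $S_m\Trans B \Sigma_0 B\Trans S_m = S_m\Trans A \bar\Sigma_0 A\Trans S_m$ and $\vec b - A(P_r\vec z_0) = \vec b - B\vec z_0$. The difference is direction and scope. The paper starts from SBI on the original system with prior $\N(\vec x; P_r\vec z_0, P_r\Sigma_0 P_r\Trans)$, writes its posterior \emph{mean}, and left-multiplies by $P_r^{-1}$ to recognise the posterior mean of the preconditioned system; it never touches the covariance. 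You run the argument the other way --- compute the full $z$-posterior, push it forward through the linear map $\vec x = P_r\vec z$ via Lemma~\ref{lem:gaussian_projection}, and match it term by term against the $x$-posterior --- and in doing so you verify that $P_r \Sigma_m^{(z)} P_r\Trans$ equals the posterior covariance of the $x$-inference as well. Since the proposition asserts equivalence of the two inferences (i.e.\ of the full Gaussian posteriors, not just their means), your version is actually the more complete one: the covariance check is a one-line extension of the same conjugation identity, but the paper leaves it implicit. Your invocation of Lemma~\ref{lem:gaussian_projection} to formalise what ``equivalent under the change of variables $\vec x = P_r\vec z$'' means is also a useful clarification that the paper glosses over.
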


\begin{proof}
Let $p(x)=\N(\vec x; \vec x_0, \Sigma_r)$. Lemma~\ref{lemma:solution_posterior} implies that after observing information from search directions~$S_m$,
the posterior mean equals
\begin{equation*}
\vec x_m = \vec x_0 + \Sigma_r A^\top  S_m (S_m^\top  A \Sigma_r A^\top S_m)^{-1}S_m^\top  \vec r_0
\end{equation*}
where $\vec r_0 = \vec b-A\vec x_0$. Setting $\vec x_0=P_r \vec z_0$ and letting $\Sigma_r=P_r\Sigma_0P_r^\top $ gives
 \begin{equation*}
\vec x_m = P_r\vec z_0 + P_r\Sigma_ 0B^\top  S_m (S_m^\top  B \Sigma_0 B^\top S_m)^{-1}S_m^\top  \hat{\vec r}_0
\end{equation*}
where $B\ce AP_r$ and $\hat{\vec r}_0 = \vec b-B\vec z_0$.
Left multiplying by $P_r^{-1}$ shows that this is equivalent to 
 \begin{align*}
\vec z_m&\ce P_r^{-1}\vec x_m \\
&=  \vec z_0 + \Sigma_ 0B^\top  S_m (S_m^\top  B \Sigma_0 B^\top S_m)^{-1}S_m^\top  \hat{\vec r}_0.
\end{align*}
Thus $\vec z_m$ is the posterior mean of the system $B\vec z^* = \vec b$ with prior Eq.~\eqref{eq:rightP_prior}
after observing search directions $S_m$.
\qed\end{proof}

\begin{proposition}[Left preconditioning]\label{prop:leftP}
Consider\newline the left-preconditioned system
\begin{equation}\label{eq:leftP}
 P_l A \vec x^*= P_l \vec b 
 \end{equation}
And the SBI prior
\begin{equation*}
  p(\vec x) = \N(\vec x; \vec x_0, \mat \Sigma_0).
\end{equation*}
Then the posterior from SBI on Eq.~\eqref{eq:leftP} under search directions $S_m$ is equivalent to the posterior from SBI applied to the system Eq.~\eqref{eq:system} under search directions $\mat P_l\Trans S_m$.
\end{proposition}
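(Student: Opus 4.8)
The plan is to apply \cref{lemma:solution_posterior} directly to the left-preconditioned system and then, via a change of variable in the search directions, show that the resulting posterior coincides with the one for the original system. Left-preconditioning replaces \cref{eq:system} by $\tilde A \vec x^* = \tilde{\vec b}$ with $\tilde A = P_l A$ and $\tilde{\vec b} = P_l \vec b$; since $P_l$ is nonsingular the two systems share the same solution $\vec x^*$, and the prior $\N(\vec x; \vec x_0, \Sigma_0)$ is unchanged. First I would instantiate \cref{lemma:solution_posterior} with $A \mapsto \tilde A$ and $\vec b \mapsto \tilde{\vec b}$, writing the posterior mean and covariance in terms of $\tilde A$, $S_m$, and the preconditioned residual $\tilde{\vec r}_0 = \tilde{\vec b} - \tilde A \vec x_0$.

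The key algebraic observation is that every factor of $S_m$ in these expressions sits next to a copy of $\tilde A = P_l A$. Concretely, $\tilde A\Trans S_m = A\Trans (P_l\Trans S_m)$ and $S_m\Trans \tilde A = (P_l\Trans S_m)\Trans A$, while the residual satisfies $\tilde{\vec r}_0 = P_l(\vec b - A\vec x_0) = P_l \vec r_0$, so that $S_m\Trans \tilde{\vec r}_0 = (P_l\Trans S_m)\Trans \vec r_0$. Introducing the transformed search directions $S_m' \ce P_l\Trans S_m$, every occurrence of $S_m$ in the preconditioned posterior thus combines with the adjacent $P_l$ to produce exactly $S_m'$.

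Substituting $S_m'$ throughout then reduces the mean and covariance for the preconditioned system to the expressions in \eqref{eq:sbi_mean} with $A$, $\vec b$, $\vec r_0$ and search directions $S_m'$ — i.e.\ precisely the SBI posterior for the original system \cref{eq:system} under search directions $P_l\Trans S_m$. The main thing to watch is consistency: one must verify that the $P_l$ factors pair with $S_m$ in \emph{every} term, so that no unpaired $P_l$ survives. This is immediate for the mean; for the covariance it requires tracking the $\tilde A$ factor carefully in each term so that the outer factors also collapse to $S_m'$. Finally, since $P_l\Trans$ is invertible, $\mathrm{range}(P_l\Trans S_m)$ has the same dimension as $\mathrm{range}(S_m)$, so the linear-independence hypothesis of \cref{lemma:solution_posterior} transfers and the conditioning remains well-posed.
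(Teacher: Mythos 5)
Your proposal is correct and takes essentially the same route as the paper: the paper substitutes $T_m = P_l^\top S_m$ into the SBI posterior for the original system and recognises the result as the posterior for the preconditioned system $P_l A \vec x^* = P_l \vec b$ under search directions $S_m$, which is exactly your factor-pairing identity $\tilde A^\top S_m = A^\top(P_l^\top S_m)$, $S_m^\top \tilde{\vec r}_0 = (P_l^\top S_m)^\top \vec r_0$ read in the opposite direction. If anything, yours is slightly more thorough, since you explicitly track the covariance as well, whereas the paper's proof verifies only the posterior mean.
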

\begin{proof}
Lemma~\ref{lemma:solution_posterior} implies that after observing search directions~$T_m$,
the posterior mean over the solution of Eq.~\eqref{eq:system} equals
\begin{equation*}
\vec x_m = \vec x_0 + \Sigma_0 A^\top  T_m (T_m^\top  A \Sigma_0 A^\top T_m)^{-1}T_m^\top  \vec r_0
\end{equation*}
where $\vec r_0 = \vec b-A\vec x_0$.
Setting $T_m= P_l^\top S_m$ gives
\begin{equation*}
\vec x_m = \vec x_0 + \Sigma_0 B^\top  S_m (S_m^\top  B \Sigma_0 B^\top  S_m)^{-1}S_m^\top P_l \hat{\vec r}_0
\end{equation*}
where $\mat B \ce P_l A$ and $\hat{\vec r}_0 = P_l \vec b-P_lA\vec x_0$.
Thus, $\vec x_m$ is the posterior mean of the system $B\vec x^* = P_l \vec b$
after observing search directions $S_m$.
\qed\end{proof}
If a probabilistic linear solver has a posterior mean which coincides with a projection method (as discussed in Section~\ref{sec:projection_methods}), the Propositions~\ref{prop:rightP} and \ref{prop:leftP} show how to obtain a probabilistic interpretation of the \emph{preconditioned} version of that algorithm.
Furthermore, the equivalence demonstrated in Section~\ref{sec:equivalence} shows that the reasoning from Propositions \ref{prop:rightP} and~\ref{prop:leftP} carries over to MBI based on left-multiplied observations: right-preconditioning corresponds to a change in prior belief,
while left-pre\-cond\-ition\-ing corresponds to a change in observations.

We do not claim that this probabilistic interpretation of preconditioning is unique.
For example, when using MBI with right-multiplied observations, the same line of reasoning can be used to show the converse: right-preconditioning corresponds to a change in the observations and left-preconditioning to a change in the prior.

\section{Conjugate Gradients} \label{sec:cg}
Conjugate gradients has been studied from a probabilistic point of view before by \citet{Hennig2015} and \citet{Cockayne:2018}.
This section generalizes the results of \citet{Hennig2015} and leverages Proposition \ref{prop:matrix_prior_equivalent_solution_prior} for new insights on BayesCG. 
For this Section (but not thereafter) assume that $\A$ is a symmetric and positive definite matrix.

\subsection{Left-multiplied view}
The BayesCG algorithm proposed by \citet{Cockayne:2018} encompasses conjugate gradients as a special case.
BayesCG uses left-multiplied observations and was derived in the solution-based perspective.

The posterior in Lemma~\ref{lemma:solution_posterior} does not immediately result in a practical algorithm as it involves the solution of a linear system based on the matrix
$\mat S_m\Trans \A \mat \Sigma_0 \A\Trans \mat S_m\in\reals^{m\times m}$, which requires $\mathcal{O}(m^3)$ arithmetic operations.
BayesCG avoids this cost by constructing search directions that are $A\Sigma_0A^\top $-orthonormal, 
as shown below, see \cite[Proposition~7]{Cockayne:2018}.

\begin{proposition}[Proposition 7 of \cite{Cockayne:2018} (BayesCG)] \label{prop:bcg_search}
	Let $\tilde{\vec s}_1 = \vec b - A \vec x_0$, and let $\vec s_1 = \tilde{\vec s}_1 / \norm{\tilde{\vec s}_1}$.
	For $j = 2,\dots,m$ let
	\begin{align*}
		\tilde{\vec s}_j &= \b - \A\vec x_{j-1} - \inner{\vec b - \A \vec x_{j-1}, \vec s_{j-1}}_{\A \Sigma_0 \A\Trans} \vec s_{j-1} \\
		\vec s_j &= \tilde{\vec s}_j / \norm{\tilde{\vec s}_j}_{\A \Sigma_0 \A\Trans} .
	\end{align*}
	Then the set $\set{\vec s_1, \dots, \vec s_m}$ is $\A \Sigma_0 \A\Trans$-orthonormal, and consequently $S_m\Trans \A \Sigma_0 \A \Trans S_m = I$.
\end{proposition}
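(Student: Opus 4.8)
The plan is to argue by induction on $j$, proving at each stage that $\set{\vec s_1,\dots,\vec s_j}$ is $M$-orthonormal, where I abbreviate $M \ce \A\Sigma_0\A\Trans$ (symmetric positive-definite, so it induces a genuine inner product). The normalisation $\norm{\vec s_j}_M = 1$ is immediate from the definition $\vec s_j = \tilde{\vec s}_j/\norm{\tilde{\vec s}_j}_M$, so all the content is in the $M$-orthogonality $\inner{\vec s_i,\vec s_j}_M = 0$ for $i<j$; the matrix identity $S_m\Trans M S_m = I$ is then just a restatement of $M$-orthonormality. (I would normalise $\vec s_1$ in the $M$-norm as well, so that $\norm{\vec s_1}_M = 1$ holds exactly.)

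The engine of the induction is the explicit posterior mean. Assuming the inductive hypothesis $S_{j-1}\Trans M S_{j-1} = I$, the inverse in Lemma~\ref{lemma:solution_posterior} collapses to the identity, so $\vec x_{j-1} = \vec x_0 + \Sigma_0\A\Trans S_{j-1}S_{j-1}\Trans\vec r_0$, whence the residual is $\vec r_{j-1} \ce \b - \A\vec x_{j-1} = \vec r_0 - M S_{j-1}S_{j-1}\Trans\vec r_0$. Two consequences do the real work. First, $S_{j-1}\Trans\vec r_{j-1} = S_{j-1}\Trans\vec r_0 - (S_{j-1}\Trans M S_{j-1})S_{j-1}\Trans\vec r_0 = \vec 0$: the residual is orthogonal, in the \emph{plain} Euclidean inner product, to all previous search directions. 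Second, since $\vec s_1 \propto \vec r_0$ and each $\tilde{\vec s}_j$ is assembled from $\vec r_{j-1}$ and $\vec s_{j-1}$, a routine induction identifies the spanned subspace as a Krylov space, $\spa(\vec s_1,\dots,\vec s_j) = K_j(M,\vec r_0)$, so that $M$ advances it by only one step: $M\vec s_i \in \spa(\vec s_1,\dots,\vec s_{i+1})$.

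With these in hand the orthogonality check is short. Expanding the definition of $\tilde{\vec s}_j$ gives, for $i<j$,
\[
\inner{\tilde{\vec s}_j,\vec s_i}_M = \inner{\vec r_{j-1},\vec s_i}_M - \inner{\vec r_{j-1},\vec s_{j-1}}_M\,\inner{\vec s_{j-1},\vec s_i}_M .
\]
For $i=j-1$ the right-hand side vanishes because $\norm{\vec s_{j-1}}_M = 1$; this is precisely the single projection the recurrence removes explicitly. For $i\le j-2$ the second term vanishes by the inductive $M$-orthogonality of $\vec s_{j-1}$ and $\vec s_i$, and the first term is $\inner{\vec r_{j-1},\vec s_i}_M = \vec r_{j-1}\Trans M\vec s_i$; since $i+1\le j-1$, the span relation lets me write $M\vec s_i = S_{j-1}\vec w$ for some $\vec w$, so that $\vec r_{j-1}\Trans S_{j-1}\vec w = 0$ by the plain orthogonality $S_{j-1}\Trans\vec r_{j-1}=\vec 0$. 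Dividing through by $\norm{\tilde{\vec s}_j}_M$ preserves orthogonality and enforces unit $M$-norm, closing the induction.

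The step I expect to be the crux is the pair of structural facts in the second paragraph. The two-term recurrence is justified only because $M$ enlarges the Krylov subspace by a single dimension, so that plain residual orthogonality to \emph{all} earlier directions automatically annihilates the $M$-inner products $\inner{\vec r_{j-1},\vec s_i}_M$ for $i\le j-2$, with no explicit subtraction required. Converting the awkward $M$-inner product into a plain one via $M\vec s_i \in \spa(\vec s_1,\dots,\vec s_{i+1})$ is the move that makes the short recurrence correct; everything else is bookkeeping.
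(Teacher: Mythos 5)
Your proof is correct, and the key structural points are exactly the right ones: under the inductive hypothesis the Gram matrix in Lemma~\ref{lemma:solution_posterior} collapses to the identity, giving Euclidean orthogonality $S_{j-1}\Trans\vec r_{j-1}=\vec 0$, and the Krylov identification $\spa(\vec s_1,\dots,\vec s_{i+1})=K_{i+1}(\A\Sigma_0\A\Trans,\vec r_0)$ is precisely what lets the two-term recurrence deliver orthogonality against \emph{all} earlier directions. Note that this paper does not actually prove the proposition --- it is quoted from \citet{Cockayne:2018} (Proposition~7), whose proof follows the same residual-orthogonality-plus-Krylov-structure argument you give, so there is no divergence in approach to report. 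You were also right to flag the normalisation of $\vec s_1$: as printed, $\vec s_1=\tilde{\vec s}_1/\norm{\tilde{\vec s}_1}$ uses the Euclidean norm, under which $\norm{\vec s_1}_{\A\Sigma_0\A\Trans}\neq 1$ in general; replacing it by the $\A\Sigma_0\A\Trans$-norm (as in the cited source) is needed for the conclusion $S_m\Trans\A\Sigma_0\A\Trans S_m=I$ to hold literally.
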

With these search directions constructed, BayesCG becomes an iterative method:

\begin{proposition}[Proposition~6 of \citet{Cockayne:2018}]\label{prop_3}
	Using the search directions from Proposition \ref{prop:bcg_search}, the posterior from Lemma \ref{lemma:solution_posterior} reduces to:
	\begin{align*}
		\vec x_m &= \vec x_{m-1} + \mat \Sigma_0 \A\Trans \vec s_m (\vec s_m \Trans (\vec b - A \vec x_{m-1})) \\
		\mat \Sigma_m &= \mat \Sigma_{m-1} - \mat \Sigma_0 \A\Trans \vec s_m \vec s_m\Trans \A \Sigma_0
	\end{align*}
\end{proposition}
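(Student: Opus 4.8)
The plan is to substitute the orthonormality relation from Proposition~\ref{prop:bcg_search} into the batch posterior of Lemma~\ref{lemma:solution_posterior}, and then exhibit the resulting expressions as a telescoping sum over the columns of $\mat S_m$. First I would use $\mat S_m\Trans \A \Sigma_0 \A\Trans \mat S_m = \mat I$ to eliminate the matrix inverse, reducing the mean and covariance of Lemma~\ref{lemma:solution_posterior} to
\begin{align*}
\vec x_m &= \vec x_0 + \Sigma_0 \A\Trans \mat S_m \mat S_m\Trans \vec r_0, \\
\Sigma_m &= \Sigma_0 - \Sigma_0 \A\Trans \mat S_m \mat S_m\Trans \Sigma_0.
\end{align*}
Writing $\mat S_m \mat S_m\Trans = \sum_{j=1}^m \vec s_j \vec s_j\Trans$ and differencing the expressions at steps $m$ and $m-1$ isolates the single new term,
\begin{align*}
\vec x_m &= \vec x_{m-1} + \Sigma_0 \A\Trans \vec s_m (\vec s_m\Trans \vec r_0), \\
\Sigma_m &= \Sigma_{m-1} - \Sigma_0 \A\Trans \vec s_m \vec s_m\Trans \Sigma_0,
\end{align*}
which already matches the claimed covariance recursion exactly.

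The remaining step, and the crux of the argument, is to replace the fixed initial residual $\vec r_0 = \vec b - \A\vec x_0$ in the mean update by the current residual $\vec b - \A\vec x_{m-1}$. I would show the two give the same scalar when contracted with $\vec s_m$, i.e.\ that $\vec s_m\Trans \A(\vec x_{m-1} - \vec x_0) = 0$. Substituting the just-derived closed form $\vec x_{m-1} - \vec x_0 = \Sigma_0 \A\Trans \sum_{j=1}^{m-1}\vec s_j (\vec s_j\Trans \vec r_0)$ turns this quantity into $\sum_{j=1}^{m-1} (\vec s_m\Trans \A \Sigma_0 \A\Trans \vec s_j)(\vec s_j\Trans \vec r_0)$, and every factor $\vec s_m\Trans \A\Sigma_0 \A\Trans \vec s_j = \inner{\vec s_m, \vec s_j}_{\A\Sigma_0\A\Trans}$ vanishes for $j < m$ by the $\A\Sigma_0\A\Trans$-orthogonality established in Proposition~\ref{prop:bcg_search}. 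Hence $\vec s_m\Trans \vec r_0 = \vec s_m\Trans(\vec b - \A\vec x_{m-1})$, completing the reduction of the mean.

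I expect the only subtlety to be bookkeeping: the telescoping is legitimate precisely because the first $m-1$ directions produced by Proposition~\ref{prop:bcg_search} are exactly the columns of $\mat S_{m-1}$ and are unchanged when a further direction is appended. Since the directions are generated sequentially, this is immediate, so no genuine obstacle arises beyond invoking conjugacy at the right moment; the entire mathematical content of the proposition lies in the conjugacy-driven residual replacement, while the covariance update and the extraction of the $m$-th summand are purely mechanical.
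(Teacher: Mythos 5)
Your proof is correct and is essentially the standard argument (the one in \citet{Cockayne:2018}, from which this paper imports the proposition without reproducing a proof): use the $\A\Sigma_0\A\Trans$-orthonormality to replace the Gram matrix by the identity, telescope over the columns of $\mat S_m$, and invoke conjugacy once more to trade $\vec s_m\Trans\vec r_0$ for $\vec s_m\Trans(\vec b - \A\vec x_{m-1})$. No gaps; the sequential-generation remark needed for the telescoping is exactly the right bookkeeping point.
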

In Proposition~4 of \citet{Cockayne:2018} it was shown that the BayesCG posterior mean corresponds to the CG solution estimate when the prior covariance is taken to be $\mat\Sigma_0 = \Ai$, though this is not a practical choice of prior covariance as it requires access to the unavailable $A^{-1}$.
Furthermore, in Proposition~9 it was shown that when using the search directions from Proposition \ref{prop:bcg_search}, the posterior mean from BCG has the following optimality property:
\begin{equation*}
	x_m = \argmin_{\vec x \in K_m(\mat\Sigma_0 \A\Trans \A, \mat \Sigma_0 \A\Trans \b)} \norm{\vec x - \vec x^*}_{\Sigma_0^{-1}}
\end{equation*}
Note that this is now a trivial special case of Proposition~\ref{prop:generic_solution_optimality}.

The following proposition leverages these results along with Proposition~\ref{prop:matrix_prior_equivalent_solution_prior} to show that there exists an MBI method which, under a particular choice of prior and with a particular methodology for the generation of search directions, is consistent with CG.

\begin{proposition}
  Consider the MBI prior
  \begin{equation*}
    p(\vecm{\Ai}) = \N(\vecm{\Ai}; \vecm{\A_0^{-1}}, \mat \Ai \otimes \mat W_0)
  \end{equation*}
  where $W_0 \in \reals^d$ is symmetric positive-definite and so that $\b\Trans W_0 \b = 1$. 
  Suppose left-multiplied information is used, and that the search directions are generated sequentially according to:
  \begin{align*}
    \tilde{\vec s}_1 &= (\mat I - \A \A_0^{-1}) \b \\
    \vec s_1 &= \frac{\tilde{\vec s_1}}{\norm{\tilde{\vec s_1}}_A}
  \end{align*}
  and for $j=2,\dots,m$
  \begin{align*}
    \tilde{\vec s}_{j} &= (\mat I - \A \A_{j-1}^{-1}) \b - \b\Trans(\mat I - \A\A_{j-1}^{-1})\Trans \A\vec s_{j-1} \cdot \vec s_{j-1} \\
    \vec s_j &= \frac{\tilde{\vec s_j}}{\norm{\tilde{\vec s_j}}_A}.
  \end{align*}
  Then it holds that the implied posterior mean on solution space, given by $\A_m^{-1} \b$, corresponds to the CG solution estimate after $m$ iterations, with starting point $\vec x_0 = \A_0^{-1} \b$.
\end{proposition}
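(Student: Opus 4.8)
The plan is to reduce this MBI statement to the already-established BayesCG results by way of the MBI--SBI equivalence. First I would identify the first Kronecker factor of the prior covariance as $\Sigma_0 = \Ai$ and record the two hypotheses needed to invoke Proposition~\ref{prop:matrix_prior_equivalent_solution_prior}: the mean condition $\A_0^{-1}\b = \vec x_0$ holds by the very definition of the starting point, and $\b\Trans W_0 \b = 1$ is assumed. Consequently, for any fixed set of search directions, the marginal on $\vec x$ induced by this MBI posterior with left-multiplied information coincides with the SBI posterior under the prior $\N(\vec x; \vec x_0, \Ai)$; in particular the implied posterior mean satisfies $\A_m^{-1}\b = \vec x_m$, where $\vec x_m$ is the SBI mean of Lemma~\ref{lemma:solution_posterior}.

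Second, I would use symmetry and positive-definiteness of $\A$ to simplify $\A\Sigma_0\A\Trans = \A\Ai\A = \A$, so that the $\A\Sigma_0\A\Trans$-inner product appearing in the BayesCG recursion of Proposition~\ref{prop:bcg_search} becomes exactly the $\A$-inner product used to normalise the $\vec s_j$ in the present statement. This alignment of the norms is what makes the two search-direction recursions directly comparable.

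The core step is then an induction showing that the search directions generated here coincide with the BayesCG directions of Proposition~\ref{prop:bcg_search} under the identification $\vec x_{j-1} = \A_{j-1}^{-1}\b$. For $j=1$ the residual is $\b - \A\vec x_0 = (\mat I - \A\A_0^{-1})\b = \tilde{\vec s}_1$, matching directly. For the inductive step, assuming $\vec x_{j-1} = \A_{j-1}^{-1}\b$, the BayesCG residual becomes $\b - \A\vec x_{j-1} = (\mat I - \A\A_{j-1}^{-1})\b$, and the projection coefficient $\inner{\b - \A\vec x_{j-1}, \vec s_{j-1}}_{\A\Sigma_0\A\Trans} = \b\Trans(\mat I - \A\A_{j-1}^{-1})\Trans \A \vec s_{j-1}$, so the BayesCG $\tilde{\vec s}_j$ agrees term by term with the $\tilde{\vec s}_j$ in the statement; since the normalisations agree as well, the $\vec s_j$ coincide. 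Because the inference step of Proposition~\ref{prop:matrix_prior_equivalent_solution_prior} produces the same mean for identical search directions, $\A_j^{-1}\b = \vec x_j$ follows, closing the induction. Finally, Proposition~\ref{prop_3} together with the fact (Proposition~4 of \citet{Cockayne:2018}) that BayesCG with $\Sigma_0 = \Ai$ reproduces the CG estimate yields the claim.

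I expect the main obstacle to be the bookkeeping in this induction: Proposition~\ref{prop:matrix_prior_equivalent_solution_prior} is phrased for search directions fixed \emph{a priori}, whereas here they are generated adaptively from the iterates. The delicate point is to interleave the equivalence of means with the search-direction recursions in the correct order, so that the adaptively computed $\A_{j-1}^{-1}$ in the MBI view may be replaced by $\vec x_{j-1}$ in the SBI/BayesCG view at each stage without circularity.
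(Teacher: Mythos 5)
Your proposal is correct and follows essentially the same route as the paper's proof: invoke Proposition~\ref{prop:matrix_prior_equivalent_solution_prior} (the two hypotheses holding by construction) to reduce to SBI with prior $\N(\vec x; \A_0^{-1}\b, \Ai)$, then verify term by term that the adaptively generated search directions coincide with those of Proposition~\ref{prop:bcg_search} using $\Ai_{j-1}\b = \vec x_{j-1}$, and conclude via the known correspondence of BayesCG with $\Sigma_0 = \Ai$ to CG. Your explicit remarks on the simplification $\A\Sigma_0\A\Trans = \A$ and on interleaving the mean identity with the direction recursion are points the paper leaves implicit, but they are the same argument.
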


\begin{proof}
  First note that, by Proposition~\ref{prop:matrix_prior_equivalent_solution_prior}, since left-multi\-plied observations are used and since $\b\Trans W_0 \b = 1$, the implied posterior distribution on solution space from MBI is identical to the posterior distribution from SBI under the prior 
  \begin{equation*}
    p(\vec x) = \N(\vec x; \A_0^{-1} \b, \Ai) .
  \end{equation*}
  It thus remains to show that the sequence of search directions generated is identical to those in Proposition~\ref{prop:bcg_search} for this prior.
  For $\tilde{\vec s_1}$:
  \begin{equation*}
    \tilde{\vec s_1} = (\mat I - \A \A_0^{-1}) \b = \b - \A \vec x_0
  \end{equation*}
  as required.
  For $\tilde{\vec s_j}$:
  \begin{align*}
    \tilde{\vec s_j} 
    &= \vec (I - \A \A_{j-1}^{-1}) \b - \b\Trans (\mat I - \A\A_{j-1}^{-1})\Trans \A \vec s_{j-1} \cdot \vec s_{j-1} \\
    &= \vec b - \A \vec x_{m-1} - (\b - \A \vec x_{j-1})\Trans \A \vec s_{j-1} \cdot \vec s_{j-1} \\
    &= \vec b - \A \vec x_{m-1} - \inner{\b - \A \vec x_{j-1}, \vec s_{j-1}}_\A \cdot \vec s_{j-1}
  \end{align*}
  where the second line uses that $\Ai_{j-1} \b = \vec x_{j-1}$.
  Thus, the search directions coincide with those in Proposition~\ref{prop:bcg_search}.
  It therefore holds that the implied posterior mean on solution space, $\Ai_m \b$, coincides with the solution estimate produced by CG.
\qed\end{proof}

\subsection{Right-multiplied view}
\label{sec:cg_in_mbi}
Interpretations of CG (and general projection methods) that use right-multiplied observations seems to require more care than those based on left-multiplied observations.
Nevertheless, \citet{Hennig2015} provided an interpretation for CG in this framework, essentially showing\footnote{Algorithm~\ref{alg:spd_proj} is not included in this form in the op.cit.} that Algorithm~\ref{alg:spd_proj} reproduces both the search directions and solution estimates from CG under the prior 
\begin{equation*}
p(\Ai)=\N(\vecm{\Ai}; \vecm{\alpha\mat I}, \beta \mat A^{-1} \sk \mat A^{-1}).
\end{equation*}
where $\alpha\in \Re\setminus\{0\}$, $\beta \in \Re^+$ and $\sk$ denotes the symmetric Kronecker product (see Section~\ref{sec:symmkron}).
The posterior under such a prior is described in Lemma 2.2 of \citet{Hennig2015} (see Lemma \ref{prop:sym_kron_posterior}), though we note that the sense in which the solution estimate $\vec x_m$ output by this algorithm is related to the posterior over $A^{-1}$ differs from that in the previous section, in the sense that $\Ai_m \b \neq \vec x_m$. (More precisely, $\vec x_m=\Ai_m (\b-\A\vec x_0)-\vec x_0 - (1-\alpha_m) \vec d_m$, as the CG estimate is corrected by the step size computed in line 6. Fixing this rank-1 discrepancy would complicate the exposition of Algorithm 1 and yield a more cumbersome algorithm).
The following proposition generalizes this result.

\begin{proposition}
\label{thm:cg_right_multiplied}
Consider the prior 
\begin{eqnarray*}
p(\Ai)=\N(\Ai; \alpha\mat I, (\beta \mat I+\gamma \Ai)\sk (\beta \mat I + \gamma \Ai)).
\end{eqnarray*}
For all choices $\alpha\in\Re\setminus\{0\}$ and $\beta,\gamma \in \Re_{+,0}$ with $\beta + \gamma>0$, \cref{alg:spd_proj} is equivalent to CG, in the sense that it produces the exact same sequence of estimates $\vec x_i$ and scaled search directions $\vec s_i$.
\end{proposition}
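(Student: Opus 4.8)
The plan is to prove the equivalence by induction on the iteration count $m$, using the characterisation of CG as the projection method whose search directions are $\A$-conjugate and span the Krylov space $K_m(\A, \vec r_0)$ with $\vec r_0 = \vec b - \A\vec x_0$, and whose iterate is the unique minimiser of $\norm{\vec x - \vec x^*}_\A$ over $\vec x_0 + K_m(\A, \vec r_0)$. Since these two properties pin down CG's directions up to scaling and its iterate uniquely, it suffices to show that \cref{alg:spd_proj} generates directions with the same conjugacy and span, together with an iterate realising the same exact line search. The parameters $\alpha,\beta,\gamma$ should then drop out, since they only rescale the prior and hence cannot affect the normalised directions or the projected iterate.

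The central simplification is that the generalised covariance factor $W_0 = \beta\mat I + \gamma\Ai$ enters the symmetric-Kronecker posterior of \cref{prop:sym_kron_posterior} only through products with the right-multiplied observations $Y_m = \A S_m$. Because
$W_0 Y_m = (\beta\mat I + \gamma\Ai)\A S_m = \beta\A S_m + \gamma S_m$
and $Y_m\Trans W_0 Y_m = \beta\, S_m\Trans\A^2 S_m + \gamma\, S_m\Trans\A S_m$ (using $\A = \A\Trans$), the unavailable inverse cancels against the factor of $\A$ carried by the observations. This yields two facts I would record first: the update is computable without $\Ai$, and, crucially, $W_0\A\vec s_i \in \spa(\vec s_i, \A\vec s_i)$, so applying the covariance keeps the search-direction recursion inside the growing Krylov space. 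I would also note that $W_0$ is symmetric positive-definite for every $\beta,\gamma\geq 0$ with $\beta+\gamma>0$, so the posterior is well-defined across the entire parameter family.

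For the inductive step, assume $\vec s_1,\dots,\vec s_{i-1}$ are $\A$-conjugate, span $K_{i-1}(\A, \vec r_0)$, and $\vec x_{i-1}$ equals the $(i-1)$-th CG iterate. Substituting the identities above into the search-direction update of \cref{alg:spd_proj}, I would show that the raw direction $\tilde{\vec s}_i$ reduces to the current residual $\vec r_{i-1} = \vec b - \A\vec x_{i-1}$ orthogonalised in the $\A$-inner product against $\vec s_{i-1}$, reproducing the two-term recursion which—together with $\A$-conjugacy to all earlier directions, inherited from the posterior covariance factoring through $\A$—characterises the CG (Lanczos) directions; the normalisation step then fixes the scaling claimed in the statement. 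Finally I would check that the step size computed in line 6 agrees with CG's exact line-search coefficient $\inner{\vec r_{i-1}, \vec s_i} / \norm{\vec s_i}_\A^2$, so that $\vec x_i = \vec x_{i-1} + \alpha_i\vec s_i$ is precisely the CG iterate.

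I expect the main obstacle to be reconciling the algorithm's solution estimate with CG's, because, as noted in the surrounding text, $\Ai_m\vec b \neq \vec x_m$: the iterate carries the rank-one correction, so that $\vec x_m = \Ai_m(\vec b - \A\vec x_0) - \vec x_0 - (1-\alpha_m)\vec d_m$. I must therefore track the line-6 step-size bookkeeping carefully and verify that this correction is exactly what converts the posterior-mean action into the genuine CG iterate. A secondary difficulty is establishing conjugacy of the generated directions against \emph{all} previous directions rather than merely the immediately preceding one; this should follow, as in the $\beta=0$ case of \cite{Hennig2015}, from the posterior covariance annihilating the already-explored search space, but confirming that this annihilation property survives the additional $\beta\mat I$ term in $W_0$ is the key algebraic check.
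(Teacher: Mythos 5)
Your high-level route is the same as the paper's: an induction in which (i) the choice $W_0=\beta \mat I+\gamma\Ai$ cancels against the right-multiplied observations so that the posterior-mean action stays in $\spa\{\vec s_1,\dots,\vec s_{i-1},\vec r_{i-1}\}$ (the paper's Lemma~\ref{lem:cg_right__d_in_span_S}), (ii) the generated directions are pairwise $\A$-conjugate (Lemma~\ref{lem:cg_right__d_is_A_conj}), and (iii) these facts, together with orthogonality of the CG residuals, collapse $\vec d_i$ to the two-term CG recursion (Lemma~\ref{lem:cg_right__d_is_scaled_s}), after which the line-6 step size is scale-invariant. However, your argument for step (ii) has a genuine gap. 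You attribute all-pairs conjugacy to ``the posterior covariance annihilating the already-explored search space'' and say the key check is that annihilation survives the $\beta\mat I$ term. Annihilation does survive --- $(\mat W_0 - \mat W_0\mat Y\mat G\mat Y\Trans \mat W_0)\mat Y = 0$ holds for any symmetric positive-definite $\mat W_0$ --- but it is not the property doing the work. Annihilation only yields the consistency identity $\Ai_{i}\vec z_k=\vec d_k$ (right multiplication). The conjugacy computation runs through $\vec d_k\Trans\A\,\vec d_i = -\vec z_k\Trans\Ai_{i-1}\vec r_{i-1}$ and therefore needs the \emph{left}-multiplied identity $\vec z_k\Trans\Ai_{i-1}=\vec d_k\Trans$, which holds only because the posterior mean is symmetric --- a consequence of the symmetric Kronecker prior with symmetric mean $\alpha\mat I$ (Remark~\ref{remark:posterior_symmetric}), not of annihilation. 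One additionally needs the exact-line-search orthogonality $\vec d_k\Trans\vec r_{i-1}=0$. Your sketch never invokes symmetry of $\Ai_{i}$, so as written the conjugacy step would not go through.

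Separately, you misidentify the ``main obstacle.'' The rank-one discrepancy $\Ai_m\b\neq\vec x_m$ is irrelevant to the equivalence proof: \cref{alg:spd_proj} defines $\vec x_i$ recursively through lines 7--10 and never computes $\Ai_m\b$. Once $\vec d_i\propto\vec p_i$ is established, the ratio $-\vec d_i\Trans\vec r_{i-1}/\vec d_i\Trans\A\vec d_i$ is invariant to the scaling of $\vec d_i$, so the iterates coincide automatically; no further step-size bookkeeping or reconciliation with the posterior-mean action is required.
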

\begin{proof}
  The proof is extensive and has been moved to Appendix~\ref{sec:proofs}.
\qed \end{proof}
  
\begin{algorithm}
\caption{The algorithm referred to by Proposition~\ref{thm:cg_right_multiplied}, which reproduces the search directions and solution estimates from CG.}\label{alg:spd_proj}
\begin{algorithmic}[1]
\LState $\vec x_0\gets \Ai_0\b$ \Comment{initial guess}
\LState $\vec r_0\gets \A \vec x_0 - \b $
\For{$i = 1, \ldots , m$}
\LState \algalign{\Ai}{\vec d_i}{-\Ai_{i-1} \vec r_{i-1}} \Comment{compute optimization direction}
\LState \algalign{\Ai}{\vec z_i}{\A \vec d_i} \Comment{\textbf{observe}}
\LState \algalign{\Ai}{\alpha_i}{-\frac{\vec d_i\Trans\vec r_{i-1}}{\vec d_i\Trans \vec z_i}} \Comment{optimal step-size}
\LState \algalign{\Ai}{\vec s_{i}}{\alpha_i \vec d_i} \Comment{re-scale step}
\LState \algalign{\Ai}{\vec y_i}{\alpha_i\vec z_i} \Comment{re-scale observation}
\LState \algalign{\Ai}{\vec x_i}{\vec x_{i-1}+\vec s_i} \Comment{update estimate for $\vec x$}
\LState \algalign{\Ai}{\vec r_i}{\vec r_{i-1}+\vec y_i} \Comment{new gradient at $\vec x_i$}
\LState \algalign{\Ai}{\Ai_i}{\Exp_{p(\Ai\mid \mat S, \mat Y)}\Ai } \Comment{estimate $\Ai$}
\EndFor
\LState \Return $\vec x_m$ 
\end{algorithmic}
\end{algorithm}
Note that, unlike previous propositions, Proposition~\ref{thm:cg_right_multiplied} proposes a prior that does not involve $\Ai$ for the case when $\gamma = 0$.

\section{GMRES} \label{sec:gmres}
The \textit{Generalised Minimal Residual Method} \citep[Section 6.5]{Saad2003iterative} 
applies to general nonsingular matrices $A$. At iteration $m$, GMRES minimises 
the residual over the affine space $\bm{x}_0 + K_m(A,\bm{ r}_0)$. 
That is, $\bm{r}_m = \bm{r}_0 - A\bm{x}_m$ satisfies
\begin{eqnarray} \label{eq:gmres_optimal}
\|\bm{r}_m\|_2 &= &\min_{\bm{x} \in K_m(A, \bm{r}_0)}{\|A\bm{x}-\bm{r}_0\|_2}\\
&=&\min_{x \in \bm{x}_0+K_m(A, \bm{r}_0)}{\|A\bm{x} - \bm{b}\|_2}.\nonumber
\end{eqnarray}
Since $A \bm{x}-\bm{b}= A  (\bm{x}- \bm{x}^*)$, this corresponds to minimizing the error in the $A^\top A$ norm.

We present a brief development of GMRES, starting with Arnoldi's method
(Section~\ref{sec:arnoldi}) and the GMRES algorithm (Section~\ref{sec:gmres_detail}),
before presenting our Bayesian interpretation (Section~\ref{sec:BGMRES}).

\subsection{Arnoldi's Method} \label{sec:arnoldi}
GMRES uses Arnoldi's method \cite[Section 6.3]{Saad2003iterative}
to construct orthonormal bases for Krylov spaces of general, nonsingular matrices~$A$.
Starting with $\bm{q}_1 = \bm{r}_0/\|\bm{r}_0\|_2$,  Arnoldi's method  recursively computes the orthonormal basis 
\begin{equation*}
Q_m = \begin{bmatrix}\bm{q}_1 &\ldots& \bm{q}_m\end{bmatrix}\in \reals^{d \times m}
\end{equation*}
for $K_m(A, \bm{r}_0)$. The basis vectors satisfy the relations
\begin{equation} \label{eq:arnoldi_recursive}
	A Q_m = Q_{m+1} \tilde{H}_m=Q_m H_m+h_{m+1,m}\bm{q}_{m+1}\bm{e}_m^\top 
	\end{equation}
and $Q_m^\top AQ_m = H_m$, where the \emph{upper Hessenberg} matrix $H_m$ is defined as
\begin{equation*}
	H_m = \begin{bmatrix} 
		h_{11}  & h_{12} & h_{13} & \dots & h_{1,m-1} & h_{1m} \\
		h_{21}  & h_{22} & h_{23} & \dots & h_{2,m-1} & h_{2m} \\
		0 		& h_{32} & h_{33} & \dots & h_{3,m-1} & h_{3m} \\
		\vdots		& 0		 & h_{43} & \dots & h_{4,m-1} & h_{3m} \\
		\vdots	&  & \ddots & \ddots & \vdots   & \vdots \\
		0		& \dots		 & \dots	  & 0& h_{m,m-1} & h_{mm}
	\end{bmatrix} \in \reals^{m \times m}
	\end{equation*}
and
\begin{equation*}
\tilde{H}_m=\begin{bmatrix} H_m \\ h_{m+1,m}\bm{e}_m^\top \end{bmatrix}\in \reals^{(m+1)\times m}.
\end{equation*}

\subsection{GMRES} \label{sec:gmres_detail}
GMRES computes the iterate
\begin{equation*}
\bm{x}_m =\bm{x}_0+ Q_m \bm{c}_m
\end{equation*} 
based on the optimality condition in Eq.~\eqref{eq:gmres_optimal}, which can equivalently be expressed as
\begin{align}
\bm{c}_m &= \argmin_{\bm{c} \in \reals^m} \norm{A Q_m \bm{c} - \bm{r}_0}_2 \label{eq:gmres_initial_optim}\\
&= \left((AQ_m)^\top (AQ_m)\right)^{-1}(AQ_m)^\top \bm{r}_0.\nonumber
\end{align}
Thus
\begin{equation}\label{eq:xmgmres}
\bm{x}_m =\bm{x}_0+ Q_m \left(Q_m^\top A^\top AQ_m\right)^{-1}Q_m^\top A^\top \bm{r}_0,
\end{equation} 
confirming that GMRES is a projection method with 
$\mat X_m=Q_m$ and $\mat U_m=\A\mat Q_m$.

GMRES solves the least squares problem in Eq.~\eqref{eq:gmres_initial_optim}.
efficiently by projecting it to a lower dimensional space via Arnoldi's method.
To this end, express the starting vector in the Krylov basis,
\begin{equation*}
\bm{r}_0 =\|\bm{r}_0\|_2 \bm{q}_1=\|\bm{r_0}\|_2 Q_{m+1} \bm{e_1},
\end{equation*}
and exploit the Arnoldi recursion from Eq.~\eqref{eq:arnoldi_recursive},
\begin{eqnarray*}
AQ_m\bm{c}-\bm{r_0}=Q_{m+1}\left(\tilde{H}_{m+1} \bm{c}-\|\bm{r}_0\|_2 \bm{e}_1\right),
\end{eqnarray*}
followed by the unitary invariance of the two-norm,
\begin{equation*}
\|A Q_m \bm{c} - \bm{r}_0\|_2 = \|\tilde{H}_m \bm{c} - \norm{\bm{r}_0}_2 \, \bm{e}_1\|_2.
\end{equation*}
Thus, instead of solving the least squares problem Equation \eqref{eq:gmres_initial_optim} 
with $d$ rows, GMRES solves instead a problem with only $m+1$ rows,
\begin{equation} \label{eq:gmres_final_optim}
	\bm{c}_m = \argmin_{\bm{c} \in \reals^m} \norm{\tilde{H}_m \bm{c} - \norm{\bm{r}_0}_2 \, \bm{e}_1}_2.
\end{equation}
The computations are summarized in  Algorithm~\ref{alg:gmres}.

\begin{algorithm}
\caption{GMRES \cite[Algorithm 6.9]{Saad2003iterative}}\label{alg:gmres}
\begin{algorithmic}[1] 
\LState $\vec r_0\gets \b - \A\vec x_0$, $\beta \gets \norm{\vec r_0}_2$, $\vec q_1\gets \vec r_0/\beta$
\For{$j=1, \ldots, m$}
\LState $\vec w_j\gets A \vec q_j$
\For{$i=1, \ldots, j$}
\LState \algalign{h_{ij}}{h_{ij}}{\inner{\vec w_j, \A\vec q_i}}
\LState \algalign{h_{ij}}{\vec w_j}{\vec w_j - h_{ij} \vec q_i}
\EndFor
\LState $h_{j+1, j}\gets \|\vec w_j\|_2$
\If{$h_{j+1,j} =0$}
\LState $m\gets j$, go to 14
\EndIf  
\LState $\vec q_{j+1} \gets \vec w_j/h_{j+1,j}$
\EndFor
\State Define  $\tilde{H}_m\in\reals^{(m+1)\times m}$ with elements $h_{ij}$
\State \algalign{\vec c_m}{\vec c_m}{\argmin_{\vec c}{\|\tilde{H}_m\vec c- \beta \vec e_1\|_2}}
\State \algalign{\vec c_m}{\vec x_m}{\vec x_0 + Q_m \vec c_m}
\end{algorithmic}
\end{algorithm}

\subsection{Bayesian Interpretation of GMRES}\label{sec:BGMRES}
We now present probabilistic linear solvers with posterior means that coincide with the solution estimate from GMRES.
\subsubsection{Left-multiplied view}
\begin{proposition}\label{prop:BG}
Under the SBI prior
\begin{equation*}p(\vec x)=\N(\vec x; \vec x_0, \mat{\Sigma}_0) \qquad \text{where} \quad \mat \Sigma_0=(\A\Trans\A)^{-1}
\end{equation*}
and the search directions $U_m = A Q_m$, the posterior mean 
is identical to the GMRES iterate $\vec x_m$ in Eq.~\eqref{eq:xmgmres}.
\end{proposition}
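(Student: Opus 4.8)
The plan is to recognize this as a direct instance of the projection-method characterization already in hand. From Eq.~\eqref{eq:xmgmres} we know GMRES is a projection method with $X_m = Q_m$ and $U_m = AQ_m$. Since $A$ is nonsingular, $A^\top A$ is symmetric positive-definite, so writing $U_m = A X_m$ we see that this setting satisfies the hypotheses of Proposition~\ref{prop:restrictive_projection_method_prior} with $R = A$: the prior covariance prescribed there is $(A^\top R)^{-1} = (A^\top A)^{-1} = \Sigma_0$, and the prescribed search directions are $S_m = U_m = A X_m = A Q_m$, exactly matching the statement. Invoking that proposition immediately identifies the SBI posterior mean with the GMRES iterate, so the bulk of the work is really just checking that the present prior and search directions are the specialization $R=A$ of the earlier, more general result.

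As a transparent self-contained alternative I would instead substitute directly into the posterior-mean formula of Lemma~\ref{lemma:solution_posterior}. Taking $S_m = AQ_m$ and $\Sigma_0 = (A^\top A)^{-1}$, the three factors collapse cleanly: the left factor is $\Sigma_0 A^\top S_m = (A^\top A)^{-1} A^\top A Q_m = Q_m$; the inner Gram matrix is $S_m^\top A \Sigma_0 A^\top S_m = Q_m^\top A^\top A Q_m$; and the right factor is $S_m^\top \vec r_0 = Q_m^\top A^\top \vec r_0$. Assembling these recovers
\[
  \vec x_m = \vec x_0 + Q_m \left(Q_m^\top A^\top A Q_m\right)^{-1} Q_m^\top A^\top \vec r_0,
\]
which is precisely the GMRES iterate of Eq.~\eqref{eq:xmgmres}.

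There is no substantive obstacle: the identity is essentially a single cancellation, with the only genuine idea being the earlier observation that pairing the GMRES constraint space $AK_m(A,\vec r_0)$ with the prior covariance $(A^\top A)^{-1}$ annihilates the extra action of $A$ on the left. The one point deserving a line of care is the invertibility of the Gram matrix $Q_m^\top A^\top A Q_m$: this holds because the Arnoldi basis $Q_m$ has orthonormal, hence linearly independent, columns and $A$ is nonsingular, so $AQ_m$ has full column rank. This is exactly the linear-independence hypothesis on $S_m = AQ_m$ required by Lemma~\ref{lemma:solution_posterior}, so the formula applies and the proof is complete.
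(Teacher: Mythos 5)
Your proposal is correct and its primary route is exactly the paper's proof: substitute $R = A$ and $U_m = AQ_m$ into Proposition~\ref{prop:restrictive_projection_method_prior}, whose prescribed prior covariance $(A^\top R)^{-1}$ becomes $(A^\top A)^{-1}$. Your self-contained direct substitution into Lemma~\ref{lemma:solution_posterior}, including the rank check on $S_m = AQ_m$, is a valid (and slightly more explicit) alternative, but it is not needed beyond the one-line invocation the paper uses.
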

\begin{proof}
Substitute $R=A$ and $U_m = \A Q_m$ into Proposition~\ref{prop:restrictive_projection_method_prior}.
\qed\end{proof}
Proposition~\ref{prop:BG} is intuitive in the context 
of Proposition~\ref{prop:generic_solution_optimality}: Setting 
$\Sigma_0 = (\A\Trans \A)^{-1}$ ensures that the norm being minimised 
coincides with that of GMRES, as does the solution space $X_m = A Q_m$.
This interpretation exhibits an interesting duality with CG for which $\mat \Sigma_0=\Ai$.

Another probabilistic interpretation follows from Proposition \ref{thm:generic_projection_method_replication}.

\begin{corollary}\label{corr:BG}
Under the prior
 \begin{align}\label{eq:gmres_prior_arnoldi}
	p(\vec x)=\N(\vec x; \vec x_0, \Sigma_0) \qquad \text{where}\quad \Sigma_0=Q_mQ_m^\top ,
\end{align}
and with observations $\vec y_m =Q_m^\top  \b$, the posterior mean from SBI is identical to the GMRES iterate $\vec x_m$ in Eq.~\eqref{eq:xmgmres}.
\end{corollary}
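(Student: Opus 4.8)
The plan is to obtain the corollary as an immediate specialisation of Proposition~\ref{thm:generic_projection_method_replication}, which already converts an arbitrary projection method into an SBI posterior mean.

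First I would invoke the identification, made below Eq.~\eqref{eq:xmgmres}, that GMRES is a projection method with solution-space matrix $X_m = Q_m$ and constraint-space matrix $U_m = \A Q_m$. Proposition~\ref{thm:generic_projection_method_replication} then applies directly: its replicating prior is $\Sigma_0 = X_m X_m\Trans = Q_m Q_m\Trans$, which is exactly the covariance in Eq.~\eqref{eq:gmres_prior_arnoldi}, and its search directions are $S_m = U_m = \A Q_m$. Since the GMRES iterate in Eq.~\eqref{eq:xmgmres} is by construction the projection-method iterate Eq.~\eqref{eq:projection_method_x} for this pair $(X_m, U_m)$, the proposition yields at once that the SBI posterior mean of Lemma~\ref{lemma:solution_posterior} under this prior coincides with the GMRES iterate.

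Before applying the proposition I would verify its hypotheses, namely that both $X_m$ and $U_m$ have linearly independent columns. The columns of $X_m = Q_m$ are orthonormal by construction of Arnoldi's method, hence independent; and since $\A$ is nonsingular and $Q_m$ has full column rank, $U_m = \A Q_m$ inherits independent columns, equivalently $Q_m\Trans \A\Trans \A Q_m$ is invertible, which Eq.~\eqref{eq:xmgmres} already presupposes.

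The step demanding the most care, and the main obstacle, is the precise meaning of the ``observations.'' In SBI the information vector is $\vec y_m = S_m\Trans \b$, so the search directions $S_m = \A Q_m$ correspond to $\vec y_m = (\A Q_m)\Trans \b = Q_m\Trans \A\Trans \b$. Using the constraint directions $\A Q_m$, rather than $Q_m$ themselves, is essential: substituting $S_m = Q_m$ with $\Sigma_0 = Q_m Q_m\Trans$ into Eq.~\eqref{eq:sbi_mean} collapses, via the Arnoldi relation $Q_m\Trans \A Q_m = H_m$, to $\vec x_0 + Q_m H_m^{-1} Q_m\Trans \vec r_0$, which is the Galerkin (full orthogonalisation) iterate, not the residual-minimising GMRES iterate. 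With the correct choice $S_m = \A Q_m$, the verification reduces to observing that $S_m\Trans \A \Sigma_0 \A\Trans S_m = (Q_m\Trans \A\Trans \A Q_m)^2$ cancels one factor of $Q_m\Trans \A\Trans \A Q_m$ from $\Sigma_0 \A\Trans S_m = Q_m (Q_m\Trans \A\Trans \A Q_m)$, leaving precisely the normal-equations factor $(Q_m\Trans \A\Trans \A Q_m)^{-1}$ appearing in Eq.~\eqref{eq:xmgmres}.
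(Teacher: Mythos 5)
Your argument is correct and is precisely the paper's route: the paper offers no separate proof of this corollary, deriving it instead by applying Proposition~\ref{thm:generic_projection_method_replication} to the identification $X_m=Q_m$, $U_m=\A Q_m$ made after Eq.~\eqref{eq:xmgmres}, exactly as you do, and your verification of the full-rank hypotheses and the closing algebra are sound.

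The issue you raise about the observations is not a defect of your proof but of the statement as printed, and you have resolved it the right way. Proposition~\ref{thm:generic_projection_method_replication} forces $S_m=U_m=\A Q_m$, hence information $\vec y_m=S_m\Trans\b=Q_m\Trans\A\Trans\b$, whereas the corollary asserts $\vec y_m=Q_m\Trans\b$, which in the SBI convention means $S_m=Q_m$. As you observe, that literal reading substituted into Eq.~\eqref{eq:sbi_mean} with $\Sigma_0=Q_mQ_m\Trans$ collapses, via $Q_m\Trans\A Q_m=H_m$, to
\begin{equation*}
\vec x_m = \vec x_0+Q_mH_m^{-1}Q_m\Trans\vec r_0,
\end{equation*}
the Galerkin (FOM) iterate. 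This differs from the GMRES iterate in general, since $Q_m\Trans\A\Trans\A Q_m=H_m\Trans H_m+h_{m+1,m}^2\,\vec e_m\vec e_m\Trans\neq H_m\Trans H_m$ whenever $h_{m+1,m}\neq 0$. So the corollary is true only with the observations read as $\vec y_m=(\A Q_m)\Trans\b=Q_m\Trans\A\Trans\b$ --- the only reading compatible with the cited proposition, and the statement your proof actually establishes.
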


Note that Proposition~\ref{prop:BG} has a posterior covariance which is not practical, as it involves $\Ai$. 
\citep{Cockayne:2017} proposed replacing $\Ai$ in the prior covariance with a preconditioner to address this, which does yield a practically computable posterior, but this extension was not explored here.
Furthermore, that approach yields poorly calibrated posterior uncertainty, as described in that work.
Corollary~\ref{corr:BG} does not have this drawback, but the posterior covariance is a matrix of zeroes.

\subsubsection{Right-multiplied view}
As for CG in Section \ref{sec:cg_in_mbi}, finding interpretations of GMRES that use right-multiplied observations appears to be more difficult.
\begin{proposition}
Under the prior 
\begin{equation}
p(\mat A^{-1})=\mathcal{N}(\mat 0, \mat \Sigma \kr \mat I)
\end{equation}
and given $\mat Y_m=\A \mat Q_m$, the implied posterior mean on the solution space given by $\mat A_m^{-1} \vec b$ is equivalent to the GMRES solution.
This correspondence breaks when $\vec x_0\neq \vec 0$.
\end{proposition}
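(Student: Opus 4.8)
The plan is to substitute this prior into the matrix-based posterior of Lemma~\ref{lemma:mbi} and then marginalise onto solution space. Reading off the prior, the mean is $\Ai_0 = \mat 0$ and the covariance is $\Sigma_0 \kr W_0$ with $\Sigma_0 = \Sigma$ and $W_0 = \mat I$. The first observation I would make is that the posterior mean $\Ai_m$ in Lemma~\ref{lemma:mbi} depends only on the row-covariance $W_0$ and not on $\Sigma_0$; consequently the unspecified matrix $\Sigma$ never enters the solution estimate, which is exactly what makes the statement well-posed.

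Next comes the routine algebra. From the right-multiplied observation model of Eq.~\eqref{eq:mbi_right_information}, the data $Y_m = \A Q_m$ corresponds to observing $S_m = \Ai Y_m = Q_m$. Substituting $\Ai_0 = \mat 0$ and $W_0 = \mat I$ into Lemma~\ref{lemma:mbi} collapses the posterior mean to
\begin{equation*}
\Ai_m = Q_m (Q_m\Trans \A\Trans \A Q_m)^{-1} Q_m\Trans \A\Trans ,
\end{equation*}
so that the implied posterior mean on solution space is
\begin{equation*}
\Ai_m \b = Q_m (Q_m\Trans \A\Trans \A Q_m)^{-1} Q_m\Trans \A\Trans \b .
\end{equation*}
Comparing with the GMRES iterate of Eq.~\eqref{eq:xmgmres} specialised to $\vec x_0 = \vec 0$, for which $\vec r_0 = \b$, shows the two expressions are literally identical, which establishes the main claim.

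For the final sentence I would make the breakdown explicit rather than merely asserting it. Let $P = Q_m (Q_m\Trans \A\Trans \A Q_m)^{-1} Q_m\Trans \A\Trans \A$ denote the $\A\Trans\A$-orthogonal projector onto $\mathrm{range}(Q_m)$. Expanding Eq.~\eqref{eq:xmgmres} with $\vec r_0 = \b - \A\vec x_0$ and collecting terms yields the identity
\begin{equation*}
\vec x_m = \Ai_m \b + (\mat I - P)\,\vec x_0 ,
\end{equation*}
so the GMRES iterate and the MBI solution estimate agree precisely when $(\mat I - P)\vec x_0 = \vec 0$, i.e.\ when $\vec x_0 \in \mathrm{range}(Q_m)$. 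This holds at $\vec x_0 = \vec 0$ but fails for generic $\vec x_0 \neq \vec 0$; the underlying reason is that the prior mean $\Ai_0 = \mat 0$ forces the implied prior mean on the solution, $\Ai_0 \b$, to be the origin, so the model is anchored at $\vec x_0 = \vec 0$ and its estimate $\Ai_m\b$ carries no additive starting-point term.

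I expect the only subtle point to be this breakdown claim. The tempting fix, shifting the prior mean so that $\Ai_0\b = \vec x_0$, does not work, because a nonzero $\Ai_0$ reinstates the cross term $\Ai_0 \A Q_m$ in Lemma~\ref{lemma:mbi} and the posterior mean no longer collapses to the clean GMRES projection. The identity $\vec x_m - \Ai_m\b = (\mat I - P)\vec x_0$ is the cleanest way to pin down exactly when the equivalence holds and when it does not, so that is the route I would take.
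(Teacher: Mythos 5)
Your proof is correct and follows essentially the same route as the paper's: substitute $\Ai_0 = \mat 0$, $\mat W_0 = \mat I$ and $\mat S_m = \mat Q_m$ into Lemma~\ref{lemma:mbi}, collapse the posterior mean to $\mat Q_m(\mat Q_m\Trans \A\Trans \A \mat Q_m)^{-1}\mat Q_m\Trans \A\Trans$, and identify this with the GMRES iterate of Eq.~\eqref{eq:xmgmres} when $\vec x_0 = \vec 0$. Your explicit projector identity $\vec x_m = \Ai_m \b + (\mat I - P)\vec x_0$ sharpens the final sentence of the statement, which the paper merely asserts, but it is an elaboration of the same argument rather than a different approach.
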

\begin{proof}
Under this prior, $\b$ applied to the posterior mean is 
\begin{align*}
	\mat A_m^{-1}\b=&\mat A_0^{-1}\b+(\mat Q_m-\mat A_m^{-1}\mat Y_m)(\mat Y_m\Trans \mat Y_m)^{-1}\mat Y_m\Trans\b
	\\=&\mat Q_m(\mat Y_m\Trans \mat Y_m)^{-1}\mat Y_m\Trans \b
	\\=&\mat Q_m(\mat Q_m\Trans \A\Trans\A\mat Q_m)^{-1}\mat Q_m\Trans \A\Trans \b
\end{align*}
which is the GMRES projection step if $\vec x_0=\vec 0$.
\qed\end{proof}
\subsection{Simulation Study}
\label{sec:simulation}
\tikzset{external/force remake=false}

In this section the simulation study of \citet{Cockayne:2018} will be replicated to demonstrate that the uncertainty produced from GMRES in Proposition \ref{prop:BG} is similarly poorly calibrated, owing to the dependence of $\mat Q_m$ on $\vec x^*$ by way of its dependence on $\vec b$.
Throughout the size of the test problems is set to $d=100$.
The eigenvalues of $\A$ were drawn from an exponential distribution with parameter $\gamma=10$, and eigenvectors uniformly from the Haar-measure over rotation-matrices (see \citet{Diaconis1987}).
In contrast to \citet{Cockayne:2018} the entries of $\b$ are drawn from a standard Gaussian distribution, rather than $\vec x_*$.
By Lemma \ref{lem:gaussian_projection}, the prior is then perfectly calibrated for this scenario, providing justification for the expectation that the posterior should be equally well-calibrated for $m\geq 1$.

Figure \ref{fig:traces} shows on the left the convergence of GMRES and on the right the convergence rate of the trace of the posterior covariance.

Figure \ref{fig:uq} repeats the uncertainty quantification study of \citet{Cockayne:2018}.
\citet{Cockayne:2018} argue that if the uncertainty is well-calibrated then $\vec x^*$ can be considered as a draw from the posterior.
Under this assumption, i.e.~$\mat \Sigma_m^{-\nicefrac{1}{2}}(\vec x^*-\vec x_m) \sim \N(\vec 0, \vec I)$ they derive the test statistic:
\begin{equation*}
Z(\vec x^*)\ce \norm{\Sigma_m^{-\nicefrac{1}{2}}(\vec x^*-\vec x_m)}\sim \chi^2_{d-m}.
\end{equation*}
It can be seen that the same poor uncertainty quantification occurs in BayesGMRES; even after just 10 iterations, the empirical distribution of the test statistic exhibits a profound left-shift, indicating an overly conservative posterior distribution.
Producing well-calibrated posteriors remains an open issue in the field of probabilistic linear solvers.

\begin{figure*}
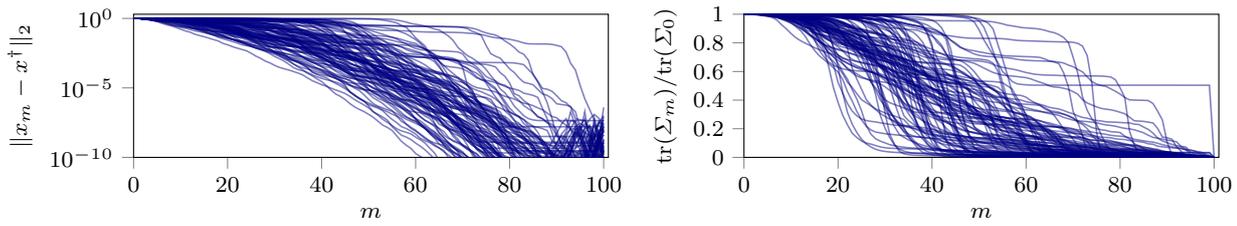
 
\label{fig:traces}
 \setlength{\figwidth}{0.45\textwidth}
 \setlength{\figheight}{0.2\textwidth}
 \input{simulation/notebooks/figures/mean_convergence__seed_12345__eq_type_spd_reverse.tikz}
 \input{simulation/notebooks/figures/var_convergence__seed_12345__eq_type_spd_reverse.tikz}
 \caption{
 Convergence of posterior mean and variance of the probabilistic interpretation of GMRES from Proposition \ref{prop:BG}. 
}
\end{figure*}

\begin{figure*}
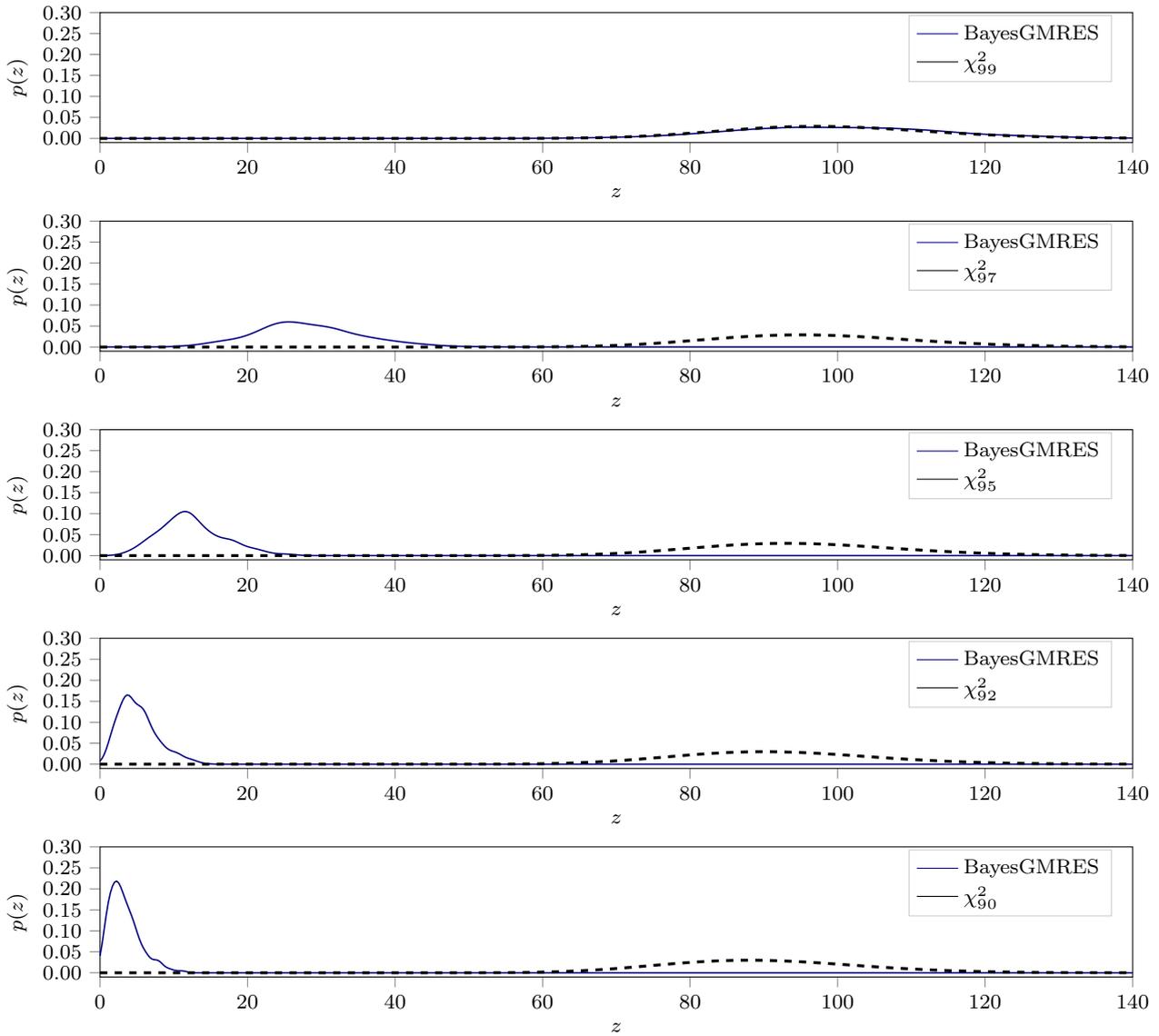

\centering
\label{fig:uq}
 \setlength{\figwidth}{0.95\textwidth}
 \setlength{\figheight}{0.2\textwidth}
 \input{simulation/notebooks/figures/bgmres_gaussian_uq__seed_12345__eq_type_spd_reverse__iteration_1.tikz}
  
 \input{simulation/notebooks/figures/bgmres_gaussian_uq__seed_12345__eq_type_spd_reverse__iteration_3.tikz}
 
 \input{simulation/notebooks/figures/bgmres_gaussian_uq__seed_12345__eq_type_spd_reverse__iteration_5.tikz}
 
 \input{simulation/notebooks/figures/bgmres_gaussian_uq__seed_12345__eq_type_spd_reverse__iteration_8.tikz}

 \input{simulation/notebooks/figures/bgmres_gaussian_uq__seed_12345__eq_type_spd_reverse__iteration_10.tikz}

\caption{
Assessment of the uncertainty quantification. Plotted are kernel density estimates for the statistic $Z$ based on $500$ randomly sampled test
problems for steps $m=\{1, 3, 5, 8, 10\}$.
These are compared with the theoretical  distribution  of $Z$ when the posterior distribution is well-calibrated. 
}
\end{figure*}

\section{Discussion}
We have established many new connections between probabilistic linear solvers and a broad class of iterative methods.
Matrix-based and solution-based inference were shown to be equivalent in a particular regime, showing that results from SBI transfer to MBI with left-multiplied observations.
Since SBI is a special case of MBI, future research will establish what additional benefits the increased generality of MBI can provide.

We also established a connection between the wide class of projection methods and probabilistic linear solvers. The common practise of preconditioning has an intuitive probabilistic interpretation, and all probabilistic linear solvers can be interpreted as projection methods.
While the converse was shown to hold, the conditions under which generic projection methods can be reproduced are somewhat restrictive; however, GMRES and CG, which are among the most commonly used projection methods, have a well-defined probabilistic interpretation.
Probabilistic interpretations of other widely used iterative methods can, we anticipate, be established from the results presented in this work.

Posterior uncertainty remains a challenge for probabilistic linear solvers. 
Direct probabilistic interpretations of CG and GMRES yield posterior covariance matrices which are not always computable, and even when the posterior can be computed the uncertainty remains poorly calibrated.
This is owed to the dependence of the search directions in Krylov methods on $\A\vec x^* = \b$, resulting in an algorithm which is not strictly Bayesian.
Mitigating this issue without sacrificing the fast rate of convergence provided by Krylov methods remains an important focus for future work.
 
\section*{Acknowledgements}
Ilse Ipsen was supported in part by NSF grant DMS-1760374.
Mark Girolami was supported by EPSRC grants [EP/R034710/1, EP/R018413/1, EP/R004889/1, EP/P020720/1], an EPSRC Established Career Fellowship EP/J016934/3, a Royal Academy of Engineering Research Chair, and The Lloyds Register Foundation Programme on Data Centric Engineering.
Philipp Hennig was supported by an ERC grant [757275\allowbreak/PANAMA].
\begin{appendices}

\section{Properties of Kronecker Products} \label{sec:kron_properties}

The following identities about Kronecker products and the vectorization operator are easily derived, but recalled here for the convenience of the reader:

\begin{align}
	\tag{K1} (A \otimes B) \vecm{C} = & \vecm{A C B^\top} \label{eq:K1}
	\\\tag{K2} (A \otimes B) (C \otimes D) = & (AC) \otimes (BD) \label{eq:K2}
	\\\tag{K3} (A \otimes B)^{-1} = & A^{-1} \otimes B^{-1} \label{eq:K3}
	\\\tag{K4} (A \otimes B)^\top = & A^\top \otimes B^\top \label{eq:K4}
	\\\tag{K5} (A + B) \otimes C = & A\otimes C + B\otimes C \label{eq:K5}
\end{align}

\subsection{The Symmetric Kronecker Product} \label{sec:symmkron}

\begin{definition}[symmetric Kronecker-product]

The \emph{symmetric} Kronecker-product for two square matrices $\mat A,\mat B\in \Re^{N\times N}$ of equal size is defined as
$$\mat A\sk\mat B\ce \mat \Gamma (\mat A\kr \mat B)\mat \Gamma$$
where $[\mat \Gamma]_{ij,kl}\ce\nicefrac{1}{2}\delta_{ik}\delta_{jl}+\nicefrac{1}{2}\delta_{il}\delta_{jk}$ satisfies $$\mat \Gamma\vecm{\mat C}=\nicefrac{1}{2}\vecm{\mat C}+\nicefrac{1}{2}\vecm{\mat C\Trans}$$ for all square-matrices $\mat C\in\Re^{N\times N}$.
\end{definition}

\begin{proposition}[Theorem 2.3 in \citet{Hennig2015}]
\label{prop:sym_kron_posterior}
\newcommand{\G}{\mat G}
Let $\mat W\in \Re^{d\times d}$ be symmetric and positive definite.
Assume a Gaussian prior of symmetric mean $\Ai_0$ and covariance $\mat W\sk \mat W$ on the elements of a symmetric matrix $\Ai$.
After $m$ linearly independent noise-free observations of the form $\mat S = \Ai \mat Y$, $\mat Y\in \Re^{d\times m}, \operatorname{rk}(\mat Y)=m$, the posterior belief over $\Ai$ is a Gaussian with mean
\begin{align*}
\Ai_m&=\Ai_0+(\mat S - \Ai_0 \mat Y)\G\mat Y\Trans \mat W \\
&\qq + \mat W\mat Y \G(\mat S-\Ai_0\mat Y)\Trans\\
&\qq + \mat W\mat Y\G\mat Y\Trans(\mat S - \Ai_0\mat Y)\G\mat Y\Trans \mat W \numberthis \label{eq:sym_mean}
\end{align*}
and posterior covariance
\begin{align}
\label{eq:sym_var}
\mat V_m = & (\mat W - \mat W\mat Y\G\mat Y\Trans \mat W)\sk (\mat W - \mat W\mat Y\G\mat Y\Trans \mat W)
\end{align}
where $\G\ce (\mat Y\Trans \mat W\mat Y)^{-1}$.
\end{proposition}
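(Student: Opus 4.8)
The plan is to reduce the claim to the generic Gaussian conditioning formula (Lemma~\ref{lem:gaussian_conditioning}) and then carry out the resulting matrix algebra inside the symmetric Kronecker calculus. First I would vectorise the noise-free observation: writing $\mat S = \mat I\,\Ai\,\mat Y$ and applying \eqref{eq:K1} gives $\vecm{\mat S} = (\mat I\kr\mat Y\Trans)\vecm{\Ai}$, so the data are a linear image of $\vecm{\Ai}$ under the observation operator $\mat M = \mat I\kr\mat Y\Trans$ with zero offset. Together with the prior $\N(\vecm{\Ai_0},\,\mat W\sk\mat W)$ this is exactly the hypothesis of Lemma~\ref{lem:gaussian_conditioning} with $\mat\Lambda=\mat 0$. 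Abbreviating $\Sigma_0 = \mat W\sk\mat W = \mat\Gamma(\mat W\kr\mat W)\mat\Gamma$, the lemma yields
\begin{align*}
\vecm{\Ai_m} &= \vecm{\Ai_0} + \Sigma_0\mat M\Trans(\mat M\Sigma_0\mat M\Trans)^{+}\!\left(\vecm{\mat S}-\mat M\vecm{\Ai_0}\right), \\
\mat V_m &= \Sigma_0 - \Sigma_0\mat M\Trans(\mat M\Sigma_0\mat M\Trans)^{+}\mat M\Sigma_0 .
\end{align*}
The pseudo-inverse is needed because $\Sigma_0$ is singular on $\reals^{d^2}$: it is supported on the range of $\mat\Gamma$, the subspace of symmetric vectorisations.

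The core of the argument is then to simplify these expressions and devectorise via \eqref{eq:K1}. I would repeatedly use that $\mat\Gamma$ is an orthogonal projector ($\mat\Gamma\Trans=\mat\Gamma$, $\mat\Gamma^2=\mat\Gamma$), that it fixes symmetric vectorisations (so $\mat\Gamma\vecm{\Ai_0}=\vecm{\Ai_0}$), the mixed identities \eqref{eq:K2}--\eqref{eq:K4}, and the invariance of the symmetric Kronecker product under swapping its factors. Reducing the inner operator $\mat M\Sigma_0\mat M\Trans = (\mat I\kr\mat Y\Trans)\mat\Gamma(\mat W\kr\mat W)\mat\Gamma(\mat I\kr\mat Y)$ with these rules should show that its relevant inverse is governed by the Gram matrix $\mat Y\Trans\mat W\mat Y$, whose inverse $\mat G=(\mat Y\Trans\mat W\mat Y)^{-1}$ exists precisely because $\operatorname{rk}(\mat Y)=m$; this is where the linear-independence hypothesis enters. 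Feeding $\mat G$ back and translating to matrices, the symmetrisation projector splits the update into a ``right'' term $(\mat S-\Ai_0\mat Y)\mat G\mat Y\Trans\mat W$, its transposed ``left'' partner $\mat W\mat Y\mat G(\mat S-\Ai_0\mat Y)\Trans$, and a two-sided correction, reproducing \eqref{eq:sym_mean}; the first of these coincides with the non-symmetric update of Lemma~\ref{lemma:mbi}, while the other two are artefacts of enforcing symmetry. For the covariance, the same reduction shows that conditioning downdates $\mat W\mapsto\mat W-\mat W\mat Y\mat G\mat Y\Trans\mat W$ in each factor of the symmetric Kronecker product, giving \eqref{eq:sym_var}.

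The main obstacle will be the bookkeeping of $\mat\Gamma$ and the pseudo-inverse rather than any single deep idea. Concretely, one must verify that $(\mat M\Sigma_0\mat M\Trans)^{+}$ may be taken on the range of $\mat M\mat\Gamma$ and that the symmetrisation factors pass cleanly through it, so that the antisymmetric null directions of $\Sigma_0$ never contribute and the pseudo-inverse effectively acts as the ordinary inverse driven by $\mat G$. Checking that the three matrix terms assemble into exactly \eqref{eq:sym_mean}, with no double counting from the two appearances of $\mat\Gamma$, is the delicate step; I expect this to hinge on the symmetric-Kronecker rule $(\mat A\sk\mat B)\vecm{\mat C}=\tfrac12\vecm{\mat A\mat C\mat B\Trans+\mat B\mat C\mat A\Trans}$ for symmetric $\mat C$, which is what lets one recognise the devectorised form and confirms the symmetry of $\Ai_m$ as a consistency check.
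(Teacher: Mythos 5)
You should first note that the paper contains no proof of this statement: it is quoted (as Theorem 2.3) from \citet{Hennig2015}, so your argument has to stand on its own rather than be compared with an in-paper derivation. Your setup is the right one --- vectorise via \eqref{eq:K1}, apply Lemma~\ref{lem:gaussian_conditioning} with $\mat M = \mat I \kr \mat Y\Trans$, $\mat\Lambda = \mat 0$, and a pseudo-inverse because $\mat\Sigma_0 = \mat W \sk \mat W$ is singular. The genuine gap is the step you defer to ``bookkeeping'': computing $(\mat M\mat\Sigma_0\mat M\Trans)^{+}$. Using that $\mat\Gamma$ commutes with $\mat W\kr\mat W$ for symmetric $\mat W$, one finds for any $\mat V\in\Re^{d\times m}$
\begin{equation*}
\mat M\mat\Sigma_0\mat M\Trans\,\vecm{\mat V} \;=\; \tfrac{1}{2}\,\textup{vec}\!\left(\mat W\mat V(\mat Y\Trans\mat W\mat Y) + \mat W\mat Y\mat V\Trans\mat W\mat Y\right),
\end{equation*}
an operator that couples $\mat V$ with $\mat V\Trans$. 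It is \emph{not} of Kronecker form, so \eqref{eq:K3} does not apply, and its pseudo-inverse is not ``the ordinary inverse driven by $\mat G$''; solving this coupled equation is exactly where the three-term structure of \eqref{eq:sym_mean} --- in particular the term quadratic in $\mat G$ --- comes from. Asserting that the symmetrisation factors ``pass cleanly through'' the pseudo-inverse assumes the conclusion: this inversion is the entire content of the theorem, not tidying-up.

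There is also a concrete obstruction you would have hit had you run your own consistency check on the observations rather than only on symmetry. For noise-free data from a degenerate Gaussian, the posterior mean is the \emph{unique} vector $\vec\mu$ with $\mat M\vec\mu = \vecm{\mat S}$ and $\vec\mu - \vecm{\Ai_0} \in \textup{range}(\mat\Sigma_0\mat M\Trans)$ (uniqueness holds because $\textup{range}(\mat\Sigma_0\mat M\Trans)\cap\ker\mat M = \{\vec 0\}$ for positive semi-definite $\mat\Sigma_0$); verifying these two properties for the stated formula is a workable route that avoids inverting the coupled operator altogether. But carrying out the first check, with $\Delta = \mat S - \Ai_0\mat Y$ and the identity $\Delta\Trans\mat Y = \mat Y\Trans\Delta$ (both equal $\mat Y\Trans\Ai\mat Y - \mat Y\Trans\Ai_0\mat Y$), gives for the formula as printed $\Ai_m\mat Y = \mat S + 2\,\mat W\mat Y\mat G\mat Y\Trans\Delta \neq \mat S$: the third term of \eqref{eq:sym_mean} must carry a \emph{minus} sign, as it does in \citet{Hennig2015}, for the posterior to reproduce the observations. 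So beyond the missing inversion argument, the statement as printed here cannot be derived by any correct argument; symmetry of $\Ai_m$, which is the only consistency check you propose, holds for either sign and would not detect this.
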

\begin{remark} \label{remark:posterior_symmetric}
Since $A_0^{-1}$ is symmetric and the symmetric prior places mass only on symmetric matrices, the posterior mean $A_m^{-1}$ is also symmetric. 
\end{remark}

\section{Proofs} \label{sec:proofs}
\subsection{Proposition \ref{prop:matrix_prior_equivalent_solution_prior}}
\begin{proof}[Proof of Proposition \ref{prop:matrix_prior_equivalent_solution_prior}]

	Let $H = A^{-1}$ and let $A_0^{-1} = H_0$.
	First note that by right-multiplying the information in \cref{eq:mbi_left_info} by $H$:
	\begin{align*}
		Y_m^\top H &= S_m^\top \\
		\implies \vecm{Y_m^\top H} &= \vecm{S_m^\top} \\
		\implies (Y_m \otimes I) \vecm{H} &= \vecm{S_m^\top} \quad\text{(from K1)}
	\end{align*}
	Now the implied posterior on $\vecm{H}$ can be computed using the standard laws of Gaussian conditioning:
	\begin{align*}
		\vecm{H} &\sim \mathcal{N}(\vecm{H_0}, \Sigma \otimes W) \\
		\implies \vecm{H} | \mathcal{D} &\sim \mathcal{N}(\vecm{H_m}, \Omega_m) .
	\end{align*}
	Let $\Omega_0 = \Sigma_0 \otimes W_0$ and let $P = Y_m^\top \otimes I$. 
	Then
	\begin{align*}
		\vecm{H_m} &= \vecm{H_0} + [P \Omega_0]^\top 
			[P \Omega_0 P^\top]^{-1}
			(\vecm{S_m^\top} - \vecm{Y_m^\top H_0}) \\
		\Omega_m &= \Omega_0 - [P \Omega_0]^\top [P \Omega_0 P^\top]^{-1} (P \Omega_0)
	\end{align*}
	Now note that
	\begin{align*}
		P \Omega_0 &= (Y_m^\top \otimes I)(\Sigma_0 \otimes W) \\
		&= (Y_m^\top \Sigma_0) \otimes W \\
		\implies
		(P \Omega_0)^\top &= (\Sigma_0 Y_m) \otimes W\\
	\end{align*}
	where the second line uses \cref{eq:K2} and the third uses \cref{eq:K4}.
	Thus
	\begin{align*}
		P \Omega_0 P^\top &= (Y_m^\top \otimes I) (\Sigma_0 \otimes W_0) (Y_m^\top \otimes I)^\top &\\
		&= (Y_m^\top \Sigma_0 Y_m)\otimes W_0 \\
		\implies (P \Omega_0 P^\top)^{-1} &= (Y_m^\top \Sigma_0 Y_m)^{-1} \otimes W_0^{-1}
	\end{align*}
	where the second line is again using \cref{eq:K2} and \cref{eq:K4}, while the third line uses \cref{eq:K3}.
	We conclude that
	\begin{align*}
		&(P \Omega_0)^\top (P \Omega_0 P^\top)^{-1} \\
		&\quad= [(\Sigma_0 Y_m) \otimes W] [(Y_m^\top \Sigma_0 Y_m)^{-1} \otimes W^{-1}] \\
		&\quad= (\Sigma_0 Y_m (Y_m^\top \Sigma_0 Y_m)^{-1}) \otimes I \\
		\implies &(P \Omega_0)^\top (P \Omega_0 P^\top)^{-1} (P\Omega_0) \\
		&\quad= (\Sigma_0 Y_m (Y_m^\top \Sigma_0 Y_m)^{-1} Y_m^\top \Sigma_0 ) \otimes W_0 .
	\end{align*}
	From these expressions it is straightforward to simplify the expressions for $\vecm{H_m}$:
	\begin{align*}
		\vecm{H_m} &= \vecm{H_0} + (\Sigma_0 Y_m (Y_m^\top \Sigma_0 Y_m)^{-1} \otimes I) (\vecm{S_m^\top} - \vecm{Y_m^\top H_0}) \\
		&= \textup{vec}\left(H_0 + \Sigma_0 Y^m (Y_m^\top \Sigma_0 Y_m)^{-1} (S_m^\top - Y_m^\top H_0) \right)
	\end{align*}
	where the last line follows from K1. For $\Omega_m$:
	\begin{align*}
		\Omega_m &= \Sigma_0 \otimes W - (\Sigma_0 Y_m (Y_m^\top \Sigma_0 Y_m)^{-1} Y_m^\top \Sigma_0 ) \otimes W_0 \\
		&= (\Sigma_0 - \Sigma_0 Y_m (Y_m^\top \Sigma_0 Y_m)^{-1} Y_m^\top \Sigma_0) \otimes W_0
	\end{align*}
	where the last line is from application of K5.

	It remains to project the posterior into $\reals^d$ by performing the matrix-vector product $H\bm{b}$.
	\begin{equation*}
		\bm{x} = \vecm{H\bm{b}} = (I \otimes \bm{b}^\top) H. \quad\text{(from K1)}
	\end{equation*}
	Thus, the implied posterior is $\bm{x} \sim \mathcal{N}(\bar{\bm{x}}_m, \bar{\Sigma}_m)$, with
	\begin{align*}
		\bar{x}_m &= (I \otimes \bm{b}^\top) \textup{vec}\big( \\
		&\qquad H_0 + \Sigma_0 Y_m (Y_m^\top \Sigma_0 Y_m)^{-1}(S_m^\top - Y_m^\top H_0)\big) \\
		&= \textup{vec}\left(H_0 \bm{b} + \Sigma_0 Y_m (Y_m^\top \Sigma_0 Y_m)^{-1}(S_m^\top\bm{b} - Y_m^\top H_0 \bm{b})\right) \\
		&= \bm{x}_0 + \Sigma_0 A^\top S_m (S_m^\top A \Sigma_0 A^\top S_m)^{-1}S_m^\top(\bm{b} - A \bm{x}_0)
	\end{align*}
	where in the last line we have used that $H_0 \bm{b} = \bm{x}_0$ and that $Y_m = A^\top S_m$.
	Furthermore
	\begin{align*}
		\bar{\Sigma}_m &= (I \otimes \bm{b}^\top ) \\
		&\qquad\cdot\left[
			(\Sigma_0 - \Sigma_0 Y_m (Y_m^\top \Sigma_0 Y_m)^{-1} Y_m^\top \Sigma_0) \otimes W_0
		\right] \\
		&\qquad\cdot(I \otimes \bm{b}^\top )^\top \\
		&= (\Sigma_0 - \Sigma_0 Y_m (Y_m^\top \Sigma_0 Y_m)^{-1} Y_m^\top \Sigma_0) \times \bm{b}^\top W_0 \bm{b} \\
		&= \Sigma_0 - \Sigma_0 A^\top S_m (S_m^\top A \Sigma_0 A^\top S_m)^{-1} S_m^\top A \Sigma_0
	\end{align*}
	where in the second line we have used K2 and the fact that $\bm{b}^\top W_0 \bm{b}$ is a scalar, while in the third line we have used that $\bm{b}^\top W_0 \bm{b} = 1$ and that $Y_m = A^\top S_m$.

	Note that $\bm{x}_m = \bar{\bm{x}}_m$ and $\Sigma_m = \bar{\Sigma}_m$, as defined in \citet{Cockayne:2018}.
	Thus, the proof is complete.
\qed\end{proof}
\subsection{Theorem \ref{thm:cg_right_multiplied}}
\begin{proof}[Proof of Theorem \ref{thm:cg_right_multiplied}.]
Denote by $\vec x_i^{CG}$ the conjugate gradient estimate in iteration $i$ and with $\vec p_i$ the search direction in that iteration.
From one iteration to the next, the update to the solution can be written as \citep[p.~108]{nocedal1999numerical}
\begin{align}
\vec x_{i+1}^{CG}=\vec x_i^{CG} + \frac{\vec r_i\Trans\vec p_i}{\vec p_i\Trans \A\vec p_i}\vec p_i.
\end{align}
Comparing this update to lines 7 to 10 in Algorithm \ref{alg:spd_proj} it is sufficient to show that $\vec d_i\propto \vec p_i$ which follows from Lemma \ref{lem:cg_right__d_is_scaled_s}.
\qed\end{proof}

\begin{lemma}
\label{lem:cg_right__d_is_scaled_s}
Assume that CG does not terminate before $d$ iterations.
Using the prior of Theorem \ref{thm:cg_right_multiplied} in Algorithm \ref{alg:spd_proj}, the directions $\vec d_i$ are scaled conjugate gradients search directions, i.e.~
\begin{align*}
\vec d_i & = \gamma_i\vec p_i^{CG}
\end{align*}
where $\vec p_{i}^{CG}$ is the CG search direction in iteration $i$ and $\gamma_i\in \Re\setminus\{0\}$.
\end{lemma}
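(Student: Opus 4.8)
The plan is to argue by induction on $i$, using the characterisation of the conjugate gradient direction $\vec p_i^{CG}$ as the unique vector, up to a nonzero scalar, that lies in the Krylov space $K_i(\A,\vec r_0)$ and is $\A$-conjugate to $K_{i-1}(\A,\vec r_0)$. The inductive hypothesis is that $\vec d_j=\gamma_j\vec p_j^{CG}$ with $\gamma_j\neq0$ for all $j<i$. First I would observe that this hypothesis automatically propagates the bookkeeping variables: because the step size in line~6 of Algorithm~\ref{alg:spd_proj} is the exact line-search coefficient $-\vec d_i\Trans\vec r_{i-1}/\vec d_i\Trans\A\vec d_i$, which is invariant under rescaling $\vec d_i$, the iterate $\vec x_{i-1}$ and gradient $\vec r_{i-1}=\A\vec x_{i-1}-\b$ produced by the algorithm coincide with those of CG, and $\spa\{\vec s_1,\dots,\vec s_{i-1}\}=K_{i-1}(\A,\vec r_0)$. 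It therefore suffices to show that $\vec d_i=-\Ai_{i-1}\vec r_{i-1}$ lies in $K_i$, is $\A$-conjugate to every $\vec s_j$ with $j<i$, and is nonzero.

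Next I would expand $\Ai_{i-1}\vec r_{i-1}$ using the symmetric-Kronecker posterior mean of Proposition~\ref{prop:sym_kron_posterior} with $W=\beta\mat I+\gamma\Ai$ and Gram matrix $\mat G=(\mat Y\Trans W\mat Y)^{-1}$, where the data are $S=[\vec s_1,\dots,\vec s_{i-1}]$ and $\mat Y=\A S$. Two simplifications drive the argument. First, every occurrence of $\Ai$ inside $W$ sits next to a factor $\A$ coming from $\mat Y=\A S$, so $\Ai$ cancels: $W\mat Y=\beta\A S+\gamma S$ and $\mat Y\Trans W\mat Y=\beta S\Trans\A^2 S+\gamma S\Trans\A S$. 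Hence no genuine inverse of $\A$ ever acts on a vector, and since all surviving factors map into $K_i$, this gives $\vec d_i\in K_i$. Second, the CG orthogonality relations collapse the data coupling: $S\Trans\vec r_{i-1}=\vec 0$ because the residual is orthogonal to all earlier directions; $\mat Y\Trans\vec r_{i-1}=S\Trans\A\vec r_{i-1}$ has only its last entry nonzero, since $\A\vec p_k\in\spa\{\vec r_k,\vec r_{k-1}\}$ and the residuals are mutually orthogonal; and likewise $\vec r_{i-1}\Trans W\mat Y\propto\vec e_{i-1}\Trans$. Using the symmetry of $\Ai_{i-1}$ from Remark~\ref{remark:posterior_symmetric}, the conjugacy condition $\vec d_i\Trans\A\vec s_j=-\vec r_{i-1}\Trans\Ai_{i-1}\vec y_j$ then reduces, after these cancellations, to the vanishing of certain entries of $M\mat G$, where $M=S\Trans\A S-\alpha\,\mat Y\Trans\mat Y$ and $\alpha$ is the prior-mean scalar.

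The main obstacle is precisely this last reduction. Unlike the non-symmetric posterior of Lemma~\ref{lemma:mbi}, the symmetric-Kronecker posterior mean does not interpolate the observations, i.e.\ $\Ai_{i-1}\mat Y\neq S$ in general, so the correction terms do not simply telescope and the older directions $\vec s_1,\dots,\vec s_{i-2}$ re-enter through the full Gram matrix $\mat G=(\beta S\Trans\A^2 S+\gamma S\Trans\A S)^{-1}$, whose inverse is dense even though $S\Trans\A S$ is diagonal and $S\Trans\A^2 S$ tridiagonal. The hard part is to show that these off-diagonal couplings nevertheless cancel for every admissible $(\beta,\gamma)$ with $\beta+\gamma>0$. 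I expect to achieve this by inserting the explicit CG scalar identities — the diagonality of $S\Trans\A S$, the tridiagonal entries of $S\Trans\A^2 S$ written through the residual norms $\norm{\vec r_k}^2$ and the CG step sizes, and the relation $\A\vec p_k=(\vec r_k-\vec r_{k-1})/\alpha_k^{CG}$ — and verifying that $\Ai_{i-1}\vec r_{i-1}$ collapses onto $\spa\{\vec r_{i-1},\vec s_{i-1}\}$ as a two-term recurrence whose coefficients match those defining $\vec p_i^{CG}=-\vec r_{i-1}+\eta_i\vec p_{i-1}^{CG}$. Nonvanishing of the leading coefficient, and hence $\gamma_i\neq0$, then follows because the surviving $\vec r_{i-1}$-component carries the coefficient $\alpha\neq0$ while $\vec r_{i-1}\notin K_{i-1}$, which completes the induction.
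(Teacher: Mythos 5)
Your skeleton (induction; show $\vec d_i \in K_i(\A,\vec r_0)$, $\A$-conjugacy to the earlier directions, and $\vec d_i \neq \vec 0$; conclude from the one-dimensionality of the $\A$-conjugate complement of $K_{i-1}$ inside $K_i$) matches the paper's decomposition into Lemmas~\ref{lem:cg_right__d_in_span_S} and~\ref{lem:cg_right__d_is_A_conj}, and your span argument is essentially the paper's. But the central step --- conjugacy --- has a genuine gap, and it stems from a false premise. You assert that the symmetric-Kronecker posterior mean does not interpolate the observations, $\Ai_{i-1}\mat Y \neq \mat S$, and this drives you into a brute-force verification involving the dense Gram inverse $(\beta \mat S\Trans \A^2 \mat S + \gamma \mat S\Trans \A \mat S)^{-1}$, which you never carry out: you only ``expect'' the off-diagonal couplings to cancel after inserting CG scalar identities. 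That uncompleted computation is the entire content of the conjugacy claim, so the proof is missing precisely where it is hardest. Moreover, the premise is wrong: since $\mat Y\Trans \mat S = \mat S\Trans \A \mat S$ is symmetric, the observations are consistent with a symmetric matrix, and a Gaussian prior supported on symmetric matrices, conditioned on the exact linear constraint $\Ai\mat Y = \mat S$, places all its mass (hence its mean) on the affine set $\set{H = H\Trans : H\mat Y = \mat S}$; the paper's proof of Lemma~\ref{lem:cg_right__d_is_A_conj} invokes exactly this consistency ($\Ai_i \vec z_k = \vec d_k$ for $k \le i$). With it, and the symmetry of $\Ai_{i-1}$ (Remark~\ref{remark:posterior_symmetric}), conjugacy is a three-line argument under your own inductive hypothesis:
\begin{equation*}
\vec s_j\Trans \A \vec d_i \;=\; -\,\vec y_j\Trans \Ai_{i-1}\vec r_{i-1} \;=\; -\,(\Ai_{i-1}\vec y_j)\Trans \vec r_{i-1} \;=\; -\,\vec s_j\Trans \vec r_{i-1} \;=\; 0,
\end{equation*}
the last equality being the orthogonality of the CG residual to all earlier search directions. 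So the step you flag as the main obstacle dissolves once the interpolation property is used rather than denied.

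A second, smaller flaw is your argument that $\gamma_i \neq 0$: you claim the corrections to $\alpha\vec r_{i-1}$ lie in $K_{i-1}$, so that the component of $\vec d_i$ outside $K_{i-1}$ carries the coefficient $\alpha \neq 0$. This is not true as stated: the posterior-mean correction terms involve $\mat W\vec y_{i-1}$, and $\vec y_{i-1} = \vec r_{i-1}-\vec r_{i-2}$ itself has a component outside $K_{i-1}$, so the leading coefficient is $\alpha$ plus an uncontrolled contribution. The paper avoids this by an explicit coefficient computation: after conjugacy forces $\vec d_i = \nu_{i-1}\vec s_{i-1} + \nu_i\vec r_{i-1}$, one more application of conjugacy with $\vec s_{i-1}$ determines $\nu_{i-1}$ in terms of $\nu_i$ and reproduces the CG two-term recurrence (Eq.~\eqref{eq:LA:proof_Thm4_inductionclose}), identifying $\vec d_i$ with a nonzero multiple of $\vec p_i$ directly. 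Your dimension-counting ending would be a clean alternative to that coefficient matching, but only once conjugacy and nonvanishing are actually established.
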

\begin{proof}

The proof proceeds by induction. 
Throughout we will suppress the superscript $CG$ on the CG search directions, i.e. $\vec p_i^\text{CG} = \vec p_i$.
For $i=1$, $\Ai_{i-1}=\alpha I$ by assumption and therefore $\vec d_i=\alpha \vec r_0$ which is the first CG search direction scaled by $\gamma_1=\alpha\neq 0$.

For the inductive step, suppose that the search directions $\vec s_1, ..., \vec s_{i-1}$ are scaled CG directions and that the vectors $\vec x_1, \dots, \vec x_{i-1}$ are the same as the first $i-1$ solution estimates produced by CG.
We will prove that $\vec s_i$ is the $i$\textsuperscript{th} CG search direction, and that $\vec x_i$ is the $i$\textsuperscript{th} solution estimate from CG.
Lemma \ref{lem:cg_right__d_in_span_S} states that $\vec d_i$ can be written as
\begin{equation}
	\vec d_i = \Ai_{i-1}\vec r_{i-1}=\sum_{j<i} \nu_j \vec s_j + \nu_i \vec r_{i-1}. \label{eq:LA:proof_thm4_sum}
\end{equation}
where $\nu_j \in \reals, j=1,\dots,i$.
Under the prior, the posterior mean $\Ai_i$ is always symmetric as stated in Remark~\ref{remark:posterior_symmetric}.
This allows application of Lemma \ref{lem:cg_right__d_is_A_conj}, so that $\{\vec s_1, \dots, \vec s_{i-1}, \vec d_i \}$ is an $A$-conjugate set. 
Thus we have, for $\ell<i$:
\begin{align*}
	0 = \vec s_\ell \Trans \A \vec d_i &= \nu_\ell \vec s_\ell\Trans \A \vec s_\ell + \nu_i \vec s_\ell\Trans \A \vec r_{i-1}\\
	&= \nu_\ell \vec s_\ell\Trans \A \vec s_\ell + \nu_i \vec y_\ell \Trans \vec r_{i-1}.\numberthis\label{eq:ortho_statement}
\end{align*}

Now note that
\begin{equation*}
	\vec y_\ell \Trans \vec r_{i-1} = (\vec r_\ell - \vec r_{\ell-1})\Trans \vec r_{i-1} .
\end{equation*}
This follows from Line 10 of Algorithm~\ref{alg:spd_proj}, from which it is clear that $\vec y_\ell=\vec r_\ell - \vec r_{\ell-1}$.
Recall that the CG residuals $\vec r_j$ are orthogonal \cite[p.~109]{nocedal1999numerical}, and that from the inductive assumption, Algorithm~\ref{alg:spd_proj} is equivalent to CG up to iteration $i-1$).
Thus, for $\ell < i-1$ we have that
\begin{align*}
  \vec y_\ell \Trans \vec r_{i-1} &= 0 \\
  \implies \vec s_\ell \A \vec d_i &= \nu_\ell \vec s_\ell\Trans \A \vec s_\ell = 0\qq\forall\,\ell < i-1
\end{align*}
where the second line is from application of the first line in Eq.~\eqref{eq:ortho_statement}.
However, $\A$ is positive definite and by assumption the algorithm has not converged, so $\vec d_\ell\neq \vec 0$.
Furthermore clearly $\vec s_\ell\Trans \A \vec s_\ell \neq 0$.
Hence we must have that
\begin{equation*}
	\nu_\ell = 0 \qq\forall\,j<i-1.
\end{equation*}
Equation~\eqref{eq:LA:proof_thm4_sum} thus simplifies to
\begin{equation}
	\vec d_i = \nu_{i-1} \vec s_{i-1} + \nu_i \vec r_{i-1} = \nu_{i-1} \alpha_{i-1} \vec d_{i-1} + \nu_i \vec r_{i-1}. \label{eq:simplified_ortho_statmenet}
\end{equation}
Now, again by Lemma \ref{lem:cg_right__d_is_A_conj}, $\vec d_i$ must be conjugate to $\vec s_{i-1}$ which implies $\nu_i\neq 0$.
Pre-multiplying Eq.~\eqref{eq:simplified_ortho_statmenet} by $\vec s_{i-1}\Trans \A$ gives 
\begin{align*}
	0 &= \nu_{i-1} \alpha_{i-1} \vec s_{i-1}\Trans \A \vec d_{i-1} + \nu_i \vec s_{i-1}\Trans \A \vec r_{i-1}\\
	\implies \nu_{i-1}\alpha_{i-1} &= -\nu_i \frac{\vec s_{i-1}\Trans\A\vec r_{i-1}}{\vec s_{i-1}\Trans \A\vec d_{i-1}} .
\end{align*}
Thus, $\vec d_i$ can be written as
\begin{align*}
 \vec d_i & = \nu_i \left(\vec r_{i-1} - \frac{\vec s_{i-1}\Trans \A \vec r_{i-1}}{\vec s_{i-1}\Trans \A \vec d_{i-1}}\vec d_{i-1} \right)\\
 & = \nu_i \left(\vec r_{i-1} - \frac{\vec p_{i-1}\Trans \A \vec r_{i-1}}{\vec p_{i-1}\Trans \A \vec p_{i-1}}\vec p_{i-1} \right)	
\numberthis\label{eq:LA:proof_Thm4_inductionclose}
\end{align*}
where the second line again applies the inductive assumption, that $\vec d_{i-1}$ and $\vec s_{i-1}$ are proportional to the CG search direction $\vec p_{i-1}$, noting that the proportionality constants on numerator and denominator cancel.
The term inside the brackets is precisely the $i$\textsuperscript{th} CG search direction.
This completes the result.
\qed\end{proof}

\begin{lemma}
\label{lem:cg_right__d_is_A_conj}
If the belief over $\Ai_m$ is symmetric for all $m=0,\dots,d$ and $\A$ is symmetric and positive definite, then Algorithm \ref{alg:spd_proj} produces $\A$-conjugate directions.
\end{lemma}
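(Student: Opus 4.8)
The plan is to prove $\A$-conjugacy by a single induction that simultaneously tracks conjugacy of the generated directions and orthogonality of the residuals to the previously generated steps. Write $\vec d_i = -\Ai_{i-1}\vec r_{i-1}$ and recall the two exact relations produced by \cref{alg:spd_proj}: $\vec y_j = \A\vec s_j$ (lines 5, 7 and 8) and $\vec r_m = \vec r_{m-1} + \vec y_m$ (line 10). Because the $\A$-inner product is symmetric it suffices to show $\vec s_j\Trans\A\vec s_i = 0$ for $j<i$, i.e. that each new direction $\vec d_i$ is $\A$-conjugate to $\vec s_1,\dots,\vec s_{i-1}$. The first step I would take is to record the identity that drives everything: using that $\A$ is symmetric so $\vec s_j\Trans\A = (\A\vec s_j)\Trans = \vec y_j\Trans$, that $\Ai_{i-1}$ is symmetric (its symmetry for every index is exactly the hypothesis, guaranteed by \cref{remark:posterior_symmetric}), and the secant/consistency relation $\Ai_{i-1}\vec y_j = \vec s_j$ for $j\le i-1$,
\begin{align*}
\vec s_j\Trans\A\vec d_i = -\vec y_j\Trans\Ai_{i-1}\vec r_{i-1} = -(\Ai_{i-1}\vec y_j)\Trans\vec r_{i-1} = -\vec s_j\Trans\vec r_{i-1}.
\end{align*}
Thus $\A$-conjugacy of $\vec d_i$ to the earlier steps is equivalent to the orthogonality $\vec s_j\Trans\vec r_{i-1}=0$ for all $j<i$.

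This reduces the lemma to establishing $\vec s_j\Trans\vec r_m = 0$ for all $j\le m$, which I would prove together with the conjugacy claim. Let $C(i)$ be the statement ``$\vec s_j\Trans\A\vec s_i=0$ for all $j<i$'' and $O(m)$ the statement ``$\vec s_j\Trans\vec r_m=0$ for all $j\le m$''. The identity above shows that $C(i)$ follows from $O(i-1)$. Conversely, assuming $O(m-1)$ and $C(m)$, I expand $\vec s_j\Trans\vec r_m = \vec s_j\Trans\vec r_{m-1} + \vec s_j\Trans\A\vec s_m$: for $j<m$ both terms vanish by $O(m-1)$ and $C(m)$; for $j=m$, substituting $\vec s_m=\alpha_m\vec d_m$, $\vec y_m=\alpha_m\A\vec d_m$ and the step size $\alpha_m=-(\vec d_m\Trans\vec r_{m-1})/(\vec d_m\Trans\A\vec d_m)$ from line 6 gives $\vec s_m\Trans\vec r_m = \alpha_m\vec d_m\Trans\vec r_{m-1} + \alpha_m^2\vec d_m\Trans\A\vec d_m = 0$, precisely because the step size is chosen to annihilate this expression. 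The base cases ($C(1)$ and $O(0)$ vacuous, $O(1)$ from the same step-size computation) are immediate, so the induction closes and yields $\A$-conjugacy of the whole family $\{\vec s_1,\vec s_2,\dots\}$, equivalently of $\{\vec d_1,\vec d_2,\dots\}$ since the $\alpha_i$ are nonzero before termination.

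I expect the genuine obstacle to be the secant relation $\Ai_m\vec y_j = \vec s_j$, not the linear-algebra bookkeeping above; this is the only point at which the particular form of the posterior mean enters. It must be extracted from the symmetric-Kronecker posterior of \cref{prop:sym_kron_posterior}, and the verification is delicate both because $\Ai_m$ is the \emph{symmetrised} conditional mean and because the prior covariance $(\beta\mat I+\gamma\Ai)\sk(\beta\mat I+\gamma\Ai)$ is itself written through $\Ai$. The route I would take is to argue that, for noise-free observations $\mat S = \Ai\mat Y$ that admit a symmetric interpolant (they do, the true $\Ai$ being one), the posterior mean satisfies the secant equations $\Ai_m\mat Y = \mat S$. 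Concretely I would multiply the mean expression of \cref{prop:sym_kron_posterior} on the right by $\mat Y$ and simplify using $G\ce(\mat Y\Trans\mat W\mat Y)^{-1}$, checking that the correction terms collapse to leave exactly $\mat S$; confirming this collapse (and that it is indeed the consistency property rather than an inconsistent over-correction) is the crux, after which everything else follows mechanically from the symmetry of $\Ai_m$ and the explicit step-size rule.
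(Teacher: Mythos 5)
Your proof is correct, and it rests on exactly the same three ingredients as the paper's: the consistency (secant) relation $\Ai_{i-1}\vec y_j=\vec s_j$, the symmetry of the posterior mean (Remark~\ref{remark:posterior_symmetric}), and the step-size rule in line 6 of Algorithm~\ref{alg:spd_proj}. What differs is the organization. The paper runs a single induction on pairwise conjugacy of the $\vec d_k$ and, inside the inductive step, recovers the needed residual information on the fly by telescoping $\vec r_i=\vec r_0+\sum_{j\le i}\vec y_j$, applying consistency to the whole sum, and then invoking the step-size formula and the inductive conjugacy; you instead isolate the invariant $O(m)$ (residual orthogonality $\vec s_j\Trans\vec r_m=0$, $j\le m$) and couple it to conjugacy $C(i)$ through the identity $\vec s_j\Trans\A\vec d_i=-\vec s_j\Trans\vec r_{i-1}$, making explicit an equivalence that is only implicit in the paper's computation. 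The two arguments are mathematically the same induction in different clothing; yours is somewhat easier to audit. Where you genuinely go beyond the paper is on the point you flag as the crux: the paper merely \emph{asserts} consistency ("the proof relies on the consistency of the MBI belief") without deriving it, whereas you propose to verify $\Ai_m\mat Y=\mat S$ from Proposition~\ref{prop:sym_kron_posterior}. That verification does work, and it hinges exactly where you suspect: right-multiplying the mean by $\mat Y$, the correction terms collapse because $\mat Y\Trans(\mat S-\Ai_0\mat Y)$ is symmetric, which holds since $\mat S=\Ai\mat Y$ with $\Ai$ and $\Ai_0$ symmetric. One warning for when you carry this out: with the sign of the last term of Eq.~\eqref{eq:sym_mean} as printed (a plus), the collapse fails and one is left with $\Ai_m\mat Y=\mat S+2\mat W\mat Y(\mat Y\Trans\mat W\mat Y)^{-1}\mat Y\Trans(\mat S-\Ai_0\mat Y)$; that term should carry a minus sign, as in Theorem~2.3 of \citet{Hennig2015}, and with that correction your check goes through and the whole argument is complete.
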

\begin{proof}
The proof is by induction.
Note that the case $i = 1$ is irrelevant since a set consisting of one element is trivially $A$-conjugate.
On many occasions the proof relies on the consistency of the MBI belief, i.e.~$\Ai_i \vec z_k=\vec d_k$ for $k\leq i$ and by symmetry $\vec z_k\Trans \Ai_i=\vec d_k\Trans$.
Thus, for the base case $i=2$ we have:
\begin{align*}
	\vec d_1\Trans \A \vec d_2 &= -\vec d_1\Trans \A (\Ai_1 \vec r_1)\\
	&= -\vec d_1\Trans \A (\Ai_1(\vec y_1 + \vec r_0))\\
	&= -\vec d_1\Trans \A (\vec s_1 + \Ai_1 \vec r_0)
\end{align*}
where the second line is by Line 10 of Algorithm~\ref{alg:spd_proj}.
Now recall that $\alpha_1=-\nicefrac{\vec d_1\Trans \vec r_0}{\vec d_1\Trans \A \vec d_1}$ to give:
\begin{align*}
\vec d_1\Trans \A \vec d_2
	&= -\alpha_1 \vec d_1\Trans \A \vec d_1 - \vec d_1\Trans \A \Ai_1 \vec r_0\\
	&= \vec d_1\Trans \vec r_0 - \vec d_1\Trans \A \Ai_1 \vec r_0 \\
	&= \vec d_1\Trans \vec r_0 - \vec z_1\Trans \Ai_1 \vec r_0 \\
	&= \vec d_1\Trans \vec r_0 - \vec d_1 \Trans \vec r_0  \numberthis\label{eq:LA:symmetry_used}\\
	&= 0.
\end{align*}
Here, the symmetry of the estimator $\Ai_i$ is used in Eq.~\eqref{eq:LA:symmetry_used}.
For the inductive step, assume $\{\vec d_0,\dots,\vec d_{i-1}\}$ are pairwise $\A$-conjugate.
For any $k<i$ we have:
\begin{align*}
	\vec d_k \Trans \A \vec d_i &= -\vec d_k\Trans \A (\Ai_i \vec r_i)\\
	&= -\vec d_k\Trans \A \Ai_i\left(\sum_{j\leq i} \vec y_j + \vec r_0\right)
\end{align*}
where the second line follows from the fact that $\vec r_i = \vec r_{i-1} + \vec y_i$. 
Thus, we have:
\begin{align*}
	\vec d_k \Trans \A \vec d_i&= -\vec d_k\Trans \A \left(\sum_{j\leq i} \vec s_j + \Ai_i \vec r_0\right)\\
	&= -\vec d_k\Trans \A \left(\sum_{j\leq i} \alpha_j \vec d_j + \Ai_i \vec r_0\right).
\end{align*}
Now, applying the conjugacy from the inductive assumption:
\begin{align*}
	\vec d_k \Trans \A \vec d_i&= -\alpha_k \vec d_k\Trans \A \vec d_k - \vec d_k\Trans \A (\Ai_i \vec r_0)\\
	&= \vec d_k\Trans \vec r_{k-1} - \vec d_k\Trans \vec r_0 \\
	&= \vec d_k\Trans \left(\sum_{j<k} \vec y_j + \vec r_0\right) - \vec d_k\Trans \vec r_0 = 0\\
	&= \sum_{j<k} \alpha_j \vec d_k\Trans \A \vec d_j = 0.
\end{align*}
where the second line rearranges line 6 of the algorithm to obtain $\alpha_i \vec d_i\Trans \vec z_i = -\vec d_i\Trans \vec r_{i-1}$.
The third line again uses that $\vec r_i = \vec r_{i-1} + \vec y_i$, while the fourth line is from the assumed conjugacy. 
\qed\end{proof}

\begin{lemma}
\label{lem:cg_right__d_in_span_S} 
Under the prior in Theorem \ref{thm:cg_right_multiplied} and given scaled CG search directions $\vec p_1, ..., \vec p_i$, it holds that \allowbreak
$\Ai_{i}\vec r_i\in \spa\{\vec p_1, ..., \vec p_{i}, \vec r_{i}\}.$
\end{lemma}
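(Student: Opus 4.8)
The plan is to substitute the symmetric-Kronecker posterior mean of Proposition~\ref{prop:sym_kron_posterior} into $\Ai_i\vec r_i$ and show that each of the four resulting terms lies in the Krylov space $K_{i+1}(\A,\vec r_0)$, which I will first identify with the target span.

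First I would record the observation structure. Lines~5--8 of Algorithm~\ref{alg:spd_proj} give $\vec y_j=\A\vec s_j$, so with $\mat S_i=[\vec s_1\ \cdots\ \vec s_i]$ and $\mat Y_i=[\vec y_1\ \cdots\ \vec y_i]$ we have $\mat S_i=\Ai\mat Y_i$. This matches the observation model of Proposition~\ref{prop:sym_kron_posterior} with prior mean $\Ai_0=\alpha\mat I$ and $W=\beta\mat I+\gamma\Ai$, which is symmetric and positive definite because $\beta,\gamma\geq 0$, $\beta+\gamma>0$ and $\A$ is symmetric positive definite. Setting $G_i=(\mat Y_i\Trans W\mat Y_i)^{-1}$ --- well defined since $W\succ 0$ and, by the hypothesis that the $\vec p_j$ are linearly independent CG directions, $\mat Y_i$ has full column rank --- Eq.~\eqref{eq:sym_mean} yields
\begin{align*}
\Ai_i\vec r_i &= \alpha\vec r_i + (\mat S_i-\alpha\mat Y_i)\,G_i\mat Y_i\Trans W\vec r_i \\
&\quad + W\mat Y_i\,G_i(\mat S_i-\alpha\mat Y_i)\Trans\vec r_i \\
&\quad + W\mat Y_i\,G_i\mat Y_i\Trans(\mat S_i-\alpha\mat Y_i)G_i\mat Y_i\Trans W\vec r_i .
\end{align*}
Grouping everything to the right of the leading matrices into coefficient vectors in $\reals^i$, one sees that $\Ai_i\vec r_i$ is a linear combination of $\vec r_i$ together with the columns of $(\mat S_i-\alpha\mat Y_i)$ and of $W\mat Y_i$.

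Next I would identify the target span. Since the scaled CG directions $\vec p_1,\dots,\vec p_i$ span $K_i(\A,\vec r_0)$ and the residual satisfies $\vec r_i\in K_{i+1}(\A,\vec r_0)$ with $\vec r_i\notin K_i(\A,\vec r_0)$ (a standard CG fact, valid because the algorithm has not terminated), the $i+1$ vectors $\vec p_1,\dots,\vec p_i,\vec r_i$ are linearly independent in the $(i+1)$-dimensional space $K_{i+1}(\A,\vec r_0)$, whence $\spa\{\vec p_1,\dots,\vec p_i,\vec r_i\}=K_{i+1}(\A,\vec r_0)$. It therefore suffices to place each column above in $K_{i+1}(\A,\vec r_0)$. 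For $\mat S_i-\alpha\mat Y_i$ this is immediate: $\vec s_j\propto\vec p_j\in K_i\subseteq K_{i+1}$ and $\vec y_j=\A\vec s_j\propto\A\vec p_j\in K_{i+1}$. For $W\mat Y_i$ I would use the cancellation
\[
W\vec y_j=(\beta\mat I+\gamma\Ai)\A\vec s_j=\beta\vec y_j+\gamma\vec s_j\in K_{i+1}(\A,\vec r_0).
\]
Combining these, all four terms lie in $K_{i+1}(\A,\vec r_0)=\spa\{\vec p_1,\dots,\vec p_i,\vec r_i\}$, which is the claim.

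The only real obstacle is the factor $\Ai$ concealed in the prior covariance $W$, since a generic application of $\Ai$ would eject a vector from the Krylov space. The resolution, displayed above, is that in the posterior mean $W$ always acts on $\mat Y_i$, whose columns are exactly $\A$ times the scaled search directions; hence $\Ai\mat Y_i=\mat S_i$ and the inverse never meets a vector outside $\mathrm{range}(\A)$. Everything else is routine bookkeeping with the Kronecker identities already used to establish Eq.~\eqref{eq:sym_mean}.
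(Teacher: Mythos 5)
Your proof is correct and takes essentially the same route as the paper's: both expand the symmetric-Kronecker posterior mean of Eq.~\eqref{eq:sym_mean} with $\Ai_0=\alpha\mat I$, neutralize the $\Ai$ hidden in $\mat W$ through the cancellation $\mat W\vec y_j=\beta\vec y_j+\gamma\vec s_j$ (since $\Ai\vec y_j=\vec s_j$), and then collapse the remaining terms using standard CG span facts. The only cosmetic difference is the final bookkeeping: the paper absorbs the $\vec y_j$ terms via $\vec y_j=\vec r_j-\vec r_{j-1}$ and the equivalence of residual and direction spans, whereas you use $\vec y_j\propto\A\vec p_j\in K_{i+1}(\A,\vec r_0)$ together with the identification $\spa\{\vec p_1,\dots,\vec p_i,\vec r_i\}=K_{i+1}(\A,\vec r_0)$.
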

\begin{proof}
Recall first that under the prior in Theorem~\ref{thm:cg_right_multiplied}, $\Ai_0 = \alpha I$.
Then by inspection of Eq.~\eqref{eq:sym_mean} we have $\Ai_i \vec r_i \in \mathcal{S}$ where
\begin{equation*}
\mathcal{S} = \spa\{\vec r_{i}, \allowbreak \vec p_1, ..., \vec p_{i}, \allowbreak \vec y_1, ..., \vec y_i, \mat W \vec y_1, ..., \mat W\vec y_i\}
\end{equation*}
By choice of $\mat W=\beta \mat I+\gamma \Ai$, $\mathcal{S}=\spa\{\vec r_{i}, \vec p_1, ..., \vec p_{i}, \allowbreak \vec y_1, ..., \vec y_i\}$.
From line 10 of Algorithm \ref{alg:spd_proj} $\vec y_i = \vec r_i - \vec r_{i-1}$ and therefore $\mathcal{S}=\spa\{\vec r_1, ..., \vec r_{i}, \allowbreak \vec p_1, ..., \vec p_{i}\}$.
By Theorem 5.3 in \cite[p.~109]{nocedal1999numerical} the span of the conjugate gradients residuals and search directions are equivalent.
Therefore $\mathcal{S}\subseteq \{\vec r_i, \vec p_1, ..., \vec p_i\}$.
\qed\end{proof}
\end{appendices}

\bibliographystyle{abbrvnat}
\bibliography{refs,bibfile}
\end{document}